\def\mathclap{\mathpalette\mathclapinternal}
\def\mathclapinternal#1#2{%
\clap{$\mathsurround=0pt#1{#2}$}}
\newtheorem{theorem}{Theorem}
\newtheorem{lemma}{Lemma}
\newcommand{\upeq}[1]{\stackrel{\mathclap{#1}}{=}}
\newcommand{\myref}[2]{\hyperref[#2]{#1~\ref*{#2}}}
\newcommand{\stretched}[2][(s_x,s_y,s_z)]{\overset{\scriptscriptstyle{#1}}{\text{#2}}}
\newcommand{\code}[1]{\texttt{#1}}
\numberwithin{equation}{section}
\def\clap#1{\hbox to 0pt{\hss#1\hss}}
\def\mathclap{\mathpalette\mathclapinternal}
\def\mathclapinternal#1#2{%
  \clap{$\mathsurround=0pt#1{#2}$}}
\numberwithin{equation}{section}
\begin{document}

\begin{frontmatter}



\title{A GPU-Parallelized Interpolation-Based Fast Multipole Method for the Relativistic Space-Charge Field Calculation}


\author[desy,uhh]{Yi-Kai Kan\corref{author}}
\author[desy,uhh]{Franz X. Kärtner}
\author[tuhh]{Sabine Le Borne}
\author[tuhh]{Jens-Peter M. Zemke}

\address[desy]{Center for Free-Electron Laser Science CFEL, Deutsches Elektronen-Synchrotron DESY, Germany}
\address[uhh]{Department of Physics, University of Hamburg, Germany}
\address[tuhh]{Hamburg University of Technology, Institute of Mathematics, Germany}
\cortext[author] {Corresponding author.\\\textit{E-mail address:} yikai.kan@desy.de (Y.-K. Kan)}

\begin{abstract}
The fast multipole method (FMM) has received growing attention in the beam physics simulation. In this study, we formulate an interpolation-based FMM for the computation of the relativistic space-charge field. Different to the quasi-electrostatic model, our FMM is formulated in the lab-frame and can be applied without the assistance of the Lorentz transformation. In particular, we derive a modified admissibility condition which can effectively control the interpolation error of the proposed FMM. The algorithms and their GPU parallelization are discussed in detail. A package containing serial and GPU-parallelized solvers is implemented in the Julia programming language. The GPU-parallelized solver can reach a speedup of more than a hundred compared to the execution on a single CPU core.
\end{abstract}

\begin{keyword}
fast multipole method\sep space-charge field calculation\sep separable approximation\sep admissibility condition\sep GPU parallelization.
\end{keyword}

\end{frontmatter}


\section{Introduction}
The space-charge effect is one of the most important topics in the study of beam physics. In recent years, the fast multipole method (FMM) has attracted increasing attention in the numerical modeling of the space-charge field~\cite{zhang2015differential,zhang2017fast,langston2021mach,gordon2021point,schmid2019simulating}. Compared to the particle-in-cell (PIC) method~\cite{dawson1983particle,birdsall2018plasma,flottmann2003recent,qiang2006three-dimensional,qiang2018symplectic} which has been a standard choice in the community of accelerator physics for decades, the FMM has the following advantages in the context of the study of beam physics:
\begin{compactenum}
    \item Through the point-to-point (P2P) operation, FMMs inherently consider the point-to-point Coulomb effects, \emph{e.g.}, disorder-induced heating and the Boersch effect which are non-negligible in the simulation for cold and dense beams~\cite{gordon2021point}.
    \item The FMM is a gridless algorithm and can effectively handle charged particle beams with complex geometry~\cite{zhang2017fast}. 
\end{compactenum}
There have been many efforts using FMM for the modeling of electrostatic 
Coulomb interactions~\cite{jones1998hybrid,zhang2015differential,gordon2021point,langston2021mach}. The electrostatic model is suitable for the study of non-relativistic particle beams, \emph{e.g.}, the simulation of ultrafast electron microscopy~\cite{zhang2015differential,gordon2021point} and the simulation of proton dynamics in synchrotrons~\cite{jones1998hybrid}. However, for the modeling of energetic electron beams, the consideration of the relativistic effect on the particle field may be essential. One approach to include relativistic effects is the quasi-electrostatic model~\cite{schmid2018reptil,schmid2019simulating} where all particles are assumed stationary in a rest-frame of the particle beam. The electrostatic field on each particle is first solved in the rest-frame and the corresponding field in the lab-frame is calculated through the Lorentz transformation~\cite{flottmann2003recent,qiang2017symplectic}. Because of the assumption made in the quasi-electrostatic model, some adjustments are necessary to incorporate the effect of the momentum spread. For example, to handle particle beams with larger energy spread, a technique called energy-binning was proposed by binning particles in energy; the total space-charge field comes from the superposition of the source particle field evaluated in the rest-frame of each bin~\cite{schmid2021energy,fubiani2006space}.   This study is the extension of our previous work on treecode~\cite{kan2023relativistic} and consists of two parts. In the first part (Sections~\ref{sec:fmm_idea_of_fmm}--\ref{sec:fmm_result}), we formulate an FMM for the efficient computation of the relativistic space-charge field. Our formulation is based on the barycentric Lagrange dual tree traversal (BLDTT) proposed in Ref.~\cite{wilson2021gpu-accelerated}. BLDTT uses barycentric Lagrange interpolation for the kernel approximation and dual tree traversal for the construction of interaction lists. Different from the quasi-electrostatic model, the proposed FMM is formulated in the lab-frame and can be applied without the use of the Lorentz transformation. We first introduce the idea of an interpolation-based FMM. After that, we formulate an interpolation-based FMM for the computation of the relativistic space-charge field. In particular, we derive a modified admissibility condition for the cluster-cluster interaction of the relativistic kernel used in the formulated FMM. The algorithms and the implementation details associated with the proposed FMM are also provided. The second part of this study (Sections~\ref{sec:fmm_array_tree}--\ref{sec:fmm_performance}) is devoted to the GPU parallelization of the proposed FMM. Different to Ref.~\cite{wilson2021gpu-accelerated} which is based on OpenACC (a directive-based programming model)~\cite{Wienke2012openacc}, our parallelization is based on the CUDA programming model.   We discuss the data structure and the design issues in the implementation of the GPU parallelization. A package containing serial and GPU-parallelized solvers is implemented in the Julia programming language. The performance of the parallel solver is also demonstrated.

\section{The Idea of FMM}\label{sec:fmm_idea_of_fmm}
In this section, we give a short overview of the interpolation-based FMM~\cite{fong2009black,wilson2021gpu-accelerated}. Consider two particle-clusters $S_t$ and $S_s$. The total force-field $f$ from the source particles in the cluster $S_s$ applied on a target particle with index $i$ and position $\bm{x}_i\in S_t$ through an interaction kernel $g(\bm{\cdot,\cdot})$ can be modeled as
\begin{equation}\label{eq:nbody_summation}
    f(\bm{x}_i) = \sum_{j\in \widehat{S}_s} g(\bm{x}_i, \bm{x}_j)m_j.
\end{equation}
Here and in the following, $\widehat{S}_t$ and $\widehat{S}_s$ represent the index sets of the particles in $S_t$ and $S_s$, respectively. The symbol $m_j$ is the physical quantity of the source particle with index $j$. Although the actual meaning of $m_j$ depends on the physics problem we investigate, without loss of generality, we call it mass throughout this section.

The idea of FMM for a fast evaluation of the summation in~\eqref{eq:nbody_summation} is based on an approximation of the kernel function by interpolating both the target variable $\bm{x}_i$ and the source variable $\bm{x}_j$
\begin{equation}\label{eq:interpolation_approx_both_variable}
    \bm{g}(\bm{x}_i,\bm{x}_j)\approx\sum_{\bm{\mu}}\sum_{\bm{\nu}}\ell_{S_t,\bm{\mu}}(\bm{x}_i)\bm{g}(\bm{\xi}_{S_t,\bm{\nu}}, \bm{\xi}_{S_s,\bm{\nu}})\ell_{S_s,\bm{\nu}}(\bm{x}_j).
\end{equation}
Here, we define the bounding box $Q$ of a cluster $S$ as
\begin{align*}
    &Q=[a_{x},b_{x}]\times[a_{y},b_{y}]\times[a_{z},b_{z}] \subset \mathbb{R}^3
\end{align*}
with $a_{g}=\min_{i\in\widehat{S}}\{g\}$, $b_{g}=\max_{i\in\widehat{S}}\{g\}$ and $g\in\{x,y,z\}$. 
The Lagrange basis polynomials over the bounding boxes of $S_t$ and $S_s$ are defined as
\begin{align*}
    &\ell_{S_t,\bm{\mu}}(\bm{x}_i):=\ell^{x}_{S_t,\mu_1}(x_i)\cdot\ell^{y}_{S_t,\mu_2}(y_i)\cdot\ell^{z}_{S_t,\mu_3}(z_i), \\
    &\ell_{S_s,\bm{\nu}}(\bm{x}_j):=\ell^{x}_{S_s,\nu_1}(x_j)\cdot\ell^{y}_{S_s,\nu_2}(y_j)\cdot\ell^{z}_{S_s,\nu_3}(z_j),
\end{align*}
with the corresponding interpolation points
\begin{equation*}
    \bm{\xi}_{S_t\,\bm{\mu}}:=(\xi_{S_t,\mu_1},\xi_{S_t,\mu_2},\xi_{S_t,\mu_3}) 
    \quad\text{and}\quad 
    \bm{\xi}_{S_s\,\bm{\nu}}:=(\xi_{S_s,\nu_1},\xi_{S_s,\nu_2},\xi_{S_t,\nu_3}).
\end{equation*}
Substituting \eqref{eq:interpolation_approx_both_variable} into \eqref{eq:nbody_summation}, we have
\begin{equation}\label{eq:FMM_scheme}
\begin{split}
    f(\bm{x}_i)
    &=\sum_{j\in \widehat{S}_s} g(\bm{x}_i, \bm{x}_j)m_j \\
    &\approx \sum_{\bm{\mu}}\ell_{S_t,\bm{\mu}}(\bm{x}_i)\sum_{\bm{\nu}}\bm{g}(\bm{\xi}_{S_t,\bm{\mu}}, \bm{\xi}_{S_s,\bm{\nu}})
    \underbrace{\sum_{j\in\widehat{S}_s}\ell_{S_s,\bm{\nu}}(\bm{x}_j)m_j}_{=:M_{S_s,\bm{\nu}}}\\
    &=\sum_{\bm{\mu}}\ell_{S_t,\bm{\mu}}(\bm{x}_i)\underbrace{\sum_{\bm{\nu}}\bm{g}(\bm{\xi}_{S_t,\bm{\mu}},\bm{\xi}_{S_s,\bm{\nu}})M_{S_s,\bm{\nu}}}_{=:L_{S_t,\bm{\mu}}}\\
    &=\sum_{\bm{\mu}}\ell_{S_t,\bm{\mu}}(\bm{x}_i)L_{S_t,\bm{\mu}}.
\end{split}
\end{equation}
By observing \eqref{eq:FMM_scheme}, we identify the four of FMM kernels:
\begin{compactitem}
\item P2M (point to multipole): the micro particles in the cluster $S_s$ are aggregated into a few macro particles and the mass of each macro particle ($M_{S_s,\bm{\nu}}$, also called multipole) can be computed by
\begin{equation}\label{eq:P2M}
    M_{S_s,\bm{\nu}}:=\sum_{j\in\widehat{S}_s}\ell_{S_s,\bm{\nu}}(\bm{x}_j)m_j.
\end{equation}
\item M2L (multipole to local): the multipoles of the source cluster are used to evaluate the force-fields acting on the macro particles ($L_{S_t,\bm{\mu}}$, also called local field) in the target cluster $S_t$
\begin{equation}\label{eq:M2L}
    L_{S_t,\mu}:= \sum_{\bm{\nu}}\bm{g}(\bm{\xi}_{S_t,\bm{\mu}}, \bm{\xi}_{S_s,\bm{\nu}})M_{S_s,\bm{\nu}}.
\end{equation}
\item L2P (local to point): in the target cluster, the effective force-fields acting on the macro particles are transferred to the micro particle at $\bm{x}_i$ by
\begin{equation}\label{eq:L2P}
    f(\bm{x_i})=\sum_{\bm{\mu}}\ell_{S_t,\bm{\mu}}(\bm{x}_i)L_{S_t,\bm{\mu}}.
\end{equation}
\item P2P (point to point): if $S_t$ and $S_s$ do not fulfill an admissibility condition (ADMC, also called multipole acceptance criteria MAC in some literature~\cite{wang2020kernel,wilson2021development}) so that~\eqref{eq:interpolation_approx_both_variable} is not applicable, the force-field can be calculated directly by
\begin{equation}\label{eq:P2P}
    f(\bm{x}_i)=\sum_{j\in \widehat{S}_s} g(\bm{x}_i, \bm{x}_j)m_j\quad\forall i\in \widehat{S}_t.
\end{equation}
This also applies for the case $S_t=S_s=S$, where $i,\,j\in \widehat{S}$ and $i\neq j$.
\end{compactitem}
One main feature of the FMM is the consideration of the cluster-cluster interaction (M2L) through macro particles; and hence, the total number of operations for the evaluation of force-fields can be drastically reduced. In the FMM, we first partition all particles in the system into a hierarchy of clusters (cluster tree). If we directly use~\eqref{eq:P2M} to compute the multipole of each cluster, the number of operations for computing the multipoles of the whole cluster tree is
\begin{equation*}
    \text{const}\cdot N\cdot N_r\cdot\log\left(\dfrac{N}{N_r}\right)
\end{equation*}
with the assumption that the number of macro particles used for the approximation and the number of micro particles contained in the leaf cluster are both $N_r$. To reduce the total number of operations for computing the multipoles, we can make use of the following property of polynomial interpolation stated in \myref{Theorem}{th:good_polynomial_property}. 
\begin{theorem}\label{th:good_polynomial_property}
If $P(x)$ is a polynomial function of degree $n$, we have
\begin{equation}\label{eq:interpolation_of_polynormial}
    P(x)=\sum^{n}_{k=0}P(\xi_{S,k})\ell_{S,k}(x)\quad \forall x\in S
\end{equation}
with $\ell_{S,k}(x)$ denoting the Lagrange basis for polynomials of degree~$\leq n$ for the interpolation point $\xi_{S,k}\in S$.
\end{theorem}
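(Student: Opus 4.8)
The plan is to use the classical fact that polynomial interpolation at $n+1$ distinct nodes is exact on polynomials of degree at most~$n$, which is nothing but the uniqueness part of the Lagrange interpolation theorem. First I would introduce the right-hand side of~\eqref{eq:interpolation_of_polynormial} as an auxiliary polynomial $R(x):=\sum_{k=0}^{n}P(\xi_{S,k})\,\ell_{S,k}(x)$ and note that $R$ has degree at most~$n$, since each Lagrange basis polynomial $\ell_{S,k}$ is a product of $n$ linear factors built from the $n+1$ pairwise distinct interpolation nodes $\xi_{S,0},\dots,\xi_{S,n}$.

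Next I would invoke the defining property of the Lagrange basis, $\ell_{S,k}(\xi_{S,j})=\delta_{jk}$ for $0\le j,k\le n$. Evaluating $R$ at each node gives $R(\xi_{S,j})=\sum_{k=0}^{n}P(\xi_{S,k})\delta_{jk}=P(\xi_{S,j})$, so that the difference $D:=P-R$ is a polynomial of degree at most~$n$ that vanishes at the $n+1$ distinct points $\xi_{S,0},\dots,\xi_{S,n}$.

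Finally I would conclude by the fundamental theorem of algebra: a polynomial of degree at most~$n$ possessing $n+1$ distinct roots must be identically zero, since a nonzero polynomial of degree~$d$ has at most $d$ roots. Hence $D\equiv 0$, i.e.\ $P(x)=R(x)$ for every $x$, and in particular for all $x\in S$, which is exactly~\eqref{eq:interpolation_of_polynormial}. There is no genuine obstacle here; the only point worth stating carefully is that the argument relies on the $n+1$ interpolation nodes being pairwise distinct, which is implicit in the phrase ``the Lagrange basis \dots\ for the interpolation point $\xi_{S,k}$'', and on the fact that the interpolant of a degree-$n$ polynomial is sought in the same space of polynomials of degree $\le n$, so the reproduction property applies.
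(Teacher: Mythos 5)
Your argument is correct and is essentially the paper's own proof written out in full detail: the paper likewise observes that the right-hand side is a polynomial of degree at most $n$ agreeing with $P$ at the $n+1$ nodes and concludes by the fundamental theorem of algebra. Your added care about the $\delta_{jk}$ property and the distinctness of the nodes only makes the same argument explicit.
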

This equality can be seen by the fundamental theorem of algebra since both LHS and RHS have the same values at the $n+1$ points $\{\xi_{S,k} \mid k=0,\ldots,n\}$ and the RHS is a polynomial of degree $n$.
By using \myref{Theorem}{th:good_polynomial_property}, we can introduce two further procedures and two kernels of FMM:
\begin{compactitem}
    \item Upward Pass: a source particle cluster $S$ is subdivided into a hierarchy of clusters of the depth~$\kappa$ (called cluster tree~\cite{boerm2010efficient,hackbusch2015hierarchical}) . The multipoles of each cluster can be computed by the multipoles of its children clusters in view of
    \begin{align}
        M_{S,\bm{\nu}}
        &=\sum_{j\in\widehat{S}}\ell_{S,\bm{\nu}}(\bm{x}_j)m_j \nonumber\\
        &=\sum_{s'\in\text{children}(S)}\sum_{j\in\widehat{s'}}\ell_{S,\bm{\nu}}(\bm{x}_j)m_j \nonumber\\
        &\upeq{\eqref{eq:interpolation_of_polynormial}}\sum_{s'\in\text{children}(S)}\sum_{j\in\widehat{s'}}\sum_{\bm{\nu}'}\ell_{S,\bm{\nu}}(\xi_{s',\bm{\nu}'})\ell_{s',\bm{\nu}'}(\bm{x}_j)m_j \nonumber\\
        &=\sum_{s'\in\text{children}(S)}\sum_{\bm{\nu}'}\ell_{S,\bm{\nu}}(\xi_{s',\bm{\nu}'})\sum_{j\in\widehat{s'}}\ell_{s',\bm{\nu}'}(\bm{x}_j)m_j \nonumber\\
        &=\sum_{s'\in\text{children}(S)}\sum_{\bm{\nu}'}\ell_{S,\bm{\nu}}(\xi_{s',\bm{\nu}'})M_{s',\bm{\nu}'}. \label{eq:M2M}
    \end{align}
    Equation~\eqref{eq:M2M} is the formula of the M2M (multipole to multipole) kernel. In FMM, the multipoles of the source cluster tree are updated by a procedure called upward pass. In the upward pass, the multipoles of the leaf clusters are first evaluated with P2M~\eqref{eq:P2M}; and then, we start from the second deepest level of the cluster tree (\emph{i.e.}, level $\kappa-1$) and apply M2M~\eqref{eq:M2M} to compute the multipoles of each cluster in each level (level by level). 
    \item Downward Pass: a target cluster $S$ is subdivided into a cluster tree of the depth $\kappa$ and each target particle (say particle~$i$) will be contained in a sequence of clusters $\{S^{l} \mid l=0,\dots,\kappa \}$ from each level $l$ with $S^{l+1}\subset S^{l}$ and $S^{0}=S$. The force-field on the target particle~$i$ can be calculated by
    \begin{equation}\label{eq:downwardpass_L2P}
        f(\bm{x}_i)=\sum^{\kappa}_{l=0}\sum_{\bm{\mu}}\ell_{S^{l},\bm{\mu}}(\bm{x}_i)L_{S^{l},\bm{\mu}} =\sum_{\bm{\mu}}\mathcal{L}_{S^{\kappa},\bm{\mu}}\ell_{S^{\kappa},\bm{\mu}}(\bm{x}_i).
    \end{equation}
    Here, we define the “cumulative local field” $\mathcal{L}_{S^{l}}$, which follows the recursive relation 
    \begin{equation}\label{eq:L2L}
        \mathcal{L}_{S^{l},\bm{\mu}}:= L_{S^{l},\bm{\mu}}+ 
        \sum_{\bm{\mu}'}\mathcal{L}_{S^{l-1},\mu'}\cdot\ell_{S^{l-1},\bm{\mu}'}(\bm{\xi}_{S^{l},\bm{\mu}})
        \quad\text{with}\quad
        \mathcal{L}_{S^{0},\bm{\mu}}:=L_{S^{0},\bm{\mu}}.
    \end{equation}
    Equation~\eqref{eq:downwardpass_L2P} can be proved by using~\eqref{eq:interpolation_of_polynormial}, \eqref{eq:L2L} and mathematical induction (cf.~\myref{Section}{appendix:fmm_proof_cumulative_local}). Therefore, during the downward pass of FMM, we first perform L2L~\eqref{eq:L2L} to calculate the cumulative local fields of the deepest-level cluster $S^{\kappa}$; afterward, we transfer $\mathcal{L}_{S^{\kappa},\mu}$ to the target particles contained in $S^{\kappa}$ via L2P~\eqref{eq:downwardpass_L2P}.
\end{compactitem}

\section{FMM Formulation for the Efficient Computation of Relativistic Space-Charge Field}\label{sec:fmm_formulation}
Consider a relativistic charged particle beam moving in $z$-direction. Inside the particle beam, the space-charge field from a source particle with the position $\bm{x}_j$ exerting to a target particle with the position $\bm{x}_i$ can be approximately written as~\cite{kan2023relativistic}
\begin{equation}\label{eq:fmm_pp_force_field}
    \bm{E}(\bm{x}_i, \bm{x}_j) \approx 
    \dfrac{q}{4\pi\epsilon_0}
    \gamma_j\bm{g}(\bm{x}, \bm{x}_j)
    \quad\text{and}\quad 
    \bm{B}(\bm{x}_i, \bm{x}_j) \approx  
    \dfrac{q}{4\pi\epsilon_0 c_0}
    \bm{p}_j\times \bm{g}(\bm{x}, \bm{x}_j),
\end{equation}
with the kernel function called “relativistic kernel”
\begin{equation}\label{eq:fmm_relativistic_kernel}
    \bm{g}(\bm{x}_i,\bm{x}_j):=\dfrac{\bm{x}_i-\bm{x}_j}{\biggl((x_i-x_j)^2+(y_i-y_j)^2+\overline{\gamma}^{2}(z_i-z_j)^2\biggr)^{3/2}},
\end{equation}
where $\gamma_j=1/\sqrt{1-\Vert\bm{\beta}_j\Vert^2_2}$ and $\bm{p}_j=\gamma_j\bm{\beta}_j$ are the Lorentz factor and normalized momentum of the particle, respectively, with $\bm{\beta}_j$ the particle velocity normalized to the speed of light $c_0$. Here, $\overline{\gamma}$ is the average Lorentz factor and can be computed by $\overline{\gamma}^2=1+\overline{\bm{p}}\cdot\overline{\bm{p}}$ with the average momentum of the particle beam $\overline{\bm{p}}$. Throughout this study, we assume that all particles in the particle beam are the same type with charge $q$.

Given a target particle with the position $\bm{x}_i$ contained in a target cluster $S_t$, the space-charge field from all the particles in the cluster $S_s$ experienced by this target particle can be computed approximately by applying \eqref{eq:interpolation_approx_both_variable} to \eqref{eq:fmm_pp_force_field},
\begin{align}\label{eqs:fmm_cluster_field}
\begin{split}
    \sum_{j\in\widehat{S}_s}\bm{E}(\bm{x}_i, \bm{x}_j)
    &\approx\dfrac{q}{4\pi\epsilon_0}\sum_{\bm{\mu}}\sum_{\bm{\nu}}\sum_{j\in \widehat{S}_s}\ell_{S_t,\bm{\mu}}(\bm{x}_i)\ell_{S_s,\bm{\nu}}(\bm{x}_j)\gamma_{j}\bm{g}(\bm{\xi}_{S_t,\bm{\mu}}, \bm{\xi}_{S_s,\bm{\nu}}) \\
    &=\dfrac{q}{4\pi\epsilon_0}\sum_{\bm{\mu}}\ell_{S_t,\bm{\mu}}(\bm{x}_i)\bm{E}_{S_t,\bm{\mu}}, \\
    \sum_{j\in\widehat{S}_s}\bm{B}(\bm{x}_i, \bm{x}_j)
    &\approx\dfrac{q}{4\pi\epsilon_0 c_0}
    \sum_{\bm{\mu}}\sum_{\bm{\nu}}\sum_{j\in \widehat{S}_s}\ell_{S_t,\bm{\mu}}(\bm{x}_i)\ell_{S_s,\bm{\nu}}(\bm{x}_j)\bm{p}_{j}\times\bm{g}(\bm{\xi}_{S_t,\bm{\mu}}, \bm{\xi}_{S_s,\bm{\nu}})\\
    &=\dfrac{q}{4\pi\epsilon_0 c_0}\sum_{\bm{\mu}}\ell_{S_t,\bm{\mu}}(\bm{x}_i)\bm{B}_{S_t,\bm{\mu}}.
\end{split}
\end{align}
Here, we introduce the effective Lorentz factor and the effective momentum of a macro particle with the position $\bm{\xi}_{S_s,\bm{\nu}}$ and index $\bm{\nu}$ in the cluster $S_s$ as
\begin{equation*}\label{eq:fmm_effective_mass}
    \gamma_{S_s,\bm{\nu}}:=\sum_{j\in \widehat{S}_s}\ell_{S_s,\bm{\nu}}(\bm{x}_j)\gamma_{j}
    \quad\text{and}\quad
    \bm{p}_{S_s,\bm{\nu}}:=\sum_{j\in \widehat{S}_s}\ell_{S_s,\bm{\nu}}(\bm{x}_j)\bm{p}_{j}.
\end{equation*}
 Similarly, the effective electric and magnetic fields experienced by a macro particle with the position $\bm{\xi}_{S_t,\bm{\mu}}$ and index $\bm{\mu}$ in the target cluster $S_t$ are defined as
\begin{equation*}
    \bm{E}_{S_t,\bm{\mu}}:=\sum_{\bm{\nu}}\gamma_{S_s,\bm{\nu}}\bm{g}(\bm{\xi}_{S_t,\bm{\mu}}, \bm{\xi}_{S_s,\bm{\nu}})
    \quad\text{and}\quad
    \bm{B}_{S_t,\bm{\mu}}:=\sum_{\bm{\nu}}\bm{p}_{S_s,\bm{\nu}}\times\bm{g}(\bm{\xi}_{S_t,\bm{\mu}}, \bm{\xi}_{S_s,\bm{\nu}}).
\end{equation*}

\section{Admissibility Condition for Cluster-Cluster Interaction of the Relativistic Kernel}\label{sec:fmm_ADMC}
In the previous section, we used Lagrangian interpolation to approximate the space-charge field on a target particle in a relativistic particle beam. It is also of importance to know when this approximation can be applied. To answer this question, we may investigate the interpolation error bound of the relativistic kernel  
\begin{equation}\label{eq:fmm_interpolation_error}
    \Vert \bm{g}(\bm{x}_i,\bm{x}_j) - \widetilde{\bm{g}}(\bm{x}_i,\bm{x}_j) \Vert_{\infty,Q_{t}\times Q_{s}}
    \leq B_{t} + B_{s}.
\end{equation}
Here, $B_{t}$ and $B_{s}$ are the interpolation error bounds with respect to the target variable $\bm{x}_i:=(x_i, y_i, z_i)$ and the source variable $\bm{x}_j:=(x_j, y_j, z_j)$:
\begin{align*}
    B_{t}&:=\text{const}\cdot\quad\sum_{\mathclap{k\in\{x_i,y_i,z_i\}}}\quad (b_k - a_k)^{n+1}\cdot\dfrac{\Vert \partial^{n+1}_{k}\bm{g}(\bm{x}_i,\bm{x}_j)\Vert_{\infty,Q_{t}\times Q_{s}}}{(n+1)!}, \\
    B_{s}&:=\text{const}\cdot\quad\sum_{\mathclap{k\in\{x_j,y_j,z_j\}}}\quad (b_k - a_k)^{n+1}\cdot\dfrac{\Vert \partial^{n+1}_{k}\bm{g}(\bm{x}_i,\bm{x}_j)\Vert_{\infty,Q_{t}\times Q_{s}}}{(n+1)!},
\end{align*}
where the bounding boxes of $S_t$ and $S_s$ are
\begin{align*}
    &Q_t=[a_{x_i},b_{x_i}]\times[a_{y_i},b_{y_i}]\times[a_{z_i},b_{z_i}] \subset \mathbb{R}^3, \\
    &Q_s=[a_{x_j},b_{x_j}]\times[a_{y_j},b_{y_j}]\times[a_{z_j},b_{z_j}] \subset \mathbb{R}^3.
\end{align*}

Following the similar analysis in Ref.~\cite{kan2023relativistic}, we can derive the error bounds of $B_{t}$ and $B_{s}$ respectively as
\begin{align*}
    B_{t}&\leq
    \text{const}\cdot\quad\sum_{\mathclap{k\in\{x_i,y_i,z_i\}}}\quad 
    s^{n+1}_k(b_k - a_k)^{n+1}\cdot\dfrac{1}{\Vert           \bm{s}\circ(\bm{x}_i-\bm{x}_j)\Vert^{n+3}_{\infty}} 
    \leq
    \dfrac{\text{const}}{\stretched[(1,1,\overline{\gamma})]{dist}(S_t,S_s)^2}\dfrac{\stretched[(1,1,\overline{\gamma})]{diam}(S_t)^{n+1}}{\stretched[(1,1,\overline{\gamma})]{dist}(S_t,S_s)^{n+1}}, \\
    B_{s} &\leq
    \text{const}\cdot\quad\sum_{\mathclap{k\in\{x_j,y_j,z_j\}}}\quad 
    s^{n+1}_k(b_k - a_k)^{n+1}\cdot\dfrac{1}{\Vert           \bm{s}\circ(\bm{x}_i-\bm{x}_j)\Vert^{n+3}_{\infty}}
    \leq
    \dfrac{\text{const}}{\stretched[(1,1,\overline{\gamma})]{dist}(S_t,S_s)^2}\dfrac{\stretched[(1,1,\overline{\gamma})]{diam}(S_s)^{n+1}}{\stretched[(1,1,\overline{\gamma})]{dist}(S_t,S_s)^{n+1}},
\end{align*}
where we define a stretched vector $\bm{s}:=(s_x,s_y,s_z)=(1,1,\overline{\gamma})$ with the average Lorentz factor $\overline{\gamma}$ in \eqref{eq:fmm_relativistic_kernel}. The symbol $\circ:\mathbb{R}^n\times\mathbb{R}^n\to\mathbb{R}^n$ denotes the component-wise product of two vectors. Here, the stretched diameter of a cluster and the stretched distance between two clusters $S_t$ and $S_s$ are
\begin{equation}\label{eq:fmm_stretched_norm_dist_def}
\begin{split}
    \stretched{diam}(S)
    &:=\max_{\bm{x}_i,\bm{x}_j\in S} \Vert(s_x,s_y,s_z)\circ(\bm{x}_i-\bm{x}_j)\Vert_2, \\
    \overset{\scriptscriptstyle{(s_x,s_y,s_z)}}{\text{dist}}(S_t,S_s)
    &:=\min_{{\substack{\bm{x}_i \in S_t \\ \bm{x}_j \in S_s}}} \Vert(s_x,s_y,s_z)\circ(\bm{x}_i-\bm{x}_j)\Vert_2.
\end{split}
\end{equation}
Therefore, the interpolation error~\eqref{eq:fmm_interpolation_error} can be bounded by
\begin{equation*}
\begin{split}
    \Vert \bm{g}(\bm{x}_i,\bm{x}_j) - \widetilde{\bm{g}}(\bm{x}_i,\bm{x}_j) \Vert_{\infty,Q_{t}\times Q_{s}}
    &\leq
    \text{const}\cdot\dfrac{\stretched[(1,1,\overline{\gamma})]{diam}(S_t)^{n+1}+\stretched[(1,1,\overline{\gamma})]{diam}(S_s)^{n+1}}{\stretched[(1,1,\overline{\gamma})]{dist}(S_t,S_s)^{n+1}} \\
    &\leq
    \text{const}\cdot\left(\dfrac{\stretched[(1,1,\overline{\gamma})]{diam}(S_t)+\stretched[(1,1,\overline{\gamma})]{diam}(S_s)}{\stretched[(1,1,\overline{\gamma})]{dist}(S_t,S_s)}\right)^{n+1} \\
    &\leq
    \text{const}\cdot\left(\dfrac{
    \max\biggl(\stretched[(1,1,\overline{\gamma})]{diam}(S_t),\stretched[(1,1,\overline{\gamma})]{diam}(S_s)\biggr)}{\stretched[(1,1,\overline{\gamma})]{dist}(S_t,S_s)}\right)^{n+1},
\end{split}
\end{equation*}
and we can define an admissibility condition for the cluster-cluster interaction of the relativistic kernel by
\begin{equation}\label{eq:fmm_stretched_admc}
    \dfrac{\max\biggl(\stretched[(1,1,\overline{\gamma})]{diam}(S_t),\stretched[(1,1,\overline{\gamma})]{diam}(S_s)\biggr)}{\stretched[(1,1,\overline{\gamma})]{dist}(S_t,S_s)} < \eta
\end{equation}
with some admissibility parameter $\eta\in\mathbb{R}_{>0}$ which can be chosen to control the interpolation error bound.

Besides deriving the stretched admissibility condition for the relativistic kernel, it is possible to bypass the mathematical analysis by using the Lorentz transformation. In the rest-frame of a particle beam with $\overline{\bm{p}}=0$, the relativistic kernel $\bm{g}$ is approximately equal to the electrostatic kernel and the conventional admissibility condition can be used for controlling the interpolation error. However, as already discussed in our previous work~\cite{kan2023relativistic}, this approach can result in a larger error when a particle beam with larger momentum spread is considered because the distance of each target-source pair in the rest-frame of the particle beam is not correctly evaluated. Therefore, we will not discuss this approach in this study.      

\section{The Procedure of FMM}\label{sec:fmm_procedure}
With the FMM kernels introduced in \myref{Section}{sec:fmm_idea_of_fmm} and the stretched admissibility condition for the cluster-cluster interaction of the relativistic kernel derived in \myref{Section}{sec:fmm_ADMC}, we can formulate an FMM for the calculation of relativistic space-charge field.
The proposed FMM~(\myref{Algorithm}{alg:FMM}) can be summarized in the following procedures: 
\begin{compactenum}
    \item A cluster tree using a k-d tree with a cardinality-balanced subdivision scheme~\cite[Section~3]{kan2023relativistic} is constructed from the particles in the system. To control the interpolation error subject to the relativistic kernel~\eqref{eq:fmm_relativistic_kernel} with the stretched admissibility condition~\eqref{eq:fmm_stretched_admc}, the effect of the stretch also needs to be considered in the particle cluster subdivision~\cite{kan2023relativistic}. Hence, the cluster tree is constructed through \myref{Algorithm}{alg:fmm_subdivide_cluster} with $\bm{s}=(1,1,\overline{\gamma})$.
    \item In the upward pass~(\myref{Algorithm}{alg:upwardpass}), the multipoles of each leaf cluster is computed by P2M~(\myref{Algorithm}{alg:P2M}) and are transferred to the multipoles of its ascendants by M2M~(\myref{Algorithm}{alg:M2M}).
    \item A list of interaction pairs is determined dynamically by the dual tree traversal~\cite{appel1985efficient,dehnen2002hierarchical}. The corresponding pseudocode is \myref{Algorithm}{alg:dualtraverse}. In the simulation of the space-charge effect for a particle beam, the roots of the target tree $S_t$ and the source tree $S_s$ are an identical cluster $S$, \emph{i.e.}, $S_t=S_s=S$. For the pair of clusters fulfilling the admissibility condition~(\myref{Algorithm}{alg:stretched_admissible_CC} with $\bm{s}=(1,1,\overline{\gamma})$), the local field of the target cluster is computed by M2L~(\myref{Algorithm}{alg:M2L}). For the interaction pair of leaf clusters, the force-fields on the particles in the target cluster are computed by P2P~(\myref{Algorithm}{alg:P2P}). 
    In our implementation, instead of using~\eqref{eq:fmm_stretched_norm_dist_def}, we adapt a different definition to compute the stretched diameter and the stretched distance as illustrated respectively in \myref{Algorithm}{alg:stretched_diam} and \myref{Algorithm}{alg:stretched_dist} because of their simplicity in the practical implementation.
    \item In the downward pass~(\myref{Algorithm}{alg:downwardpass}), the cumulative local fields of each cluster are transferred to its descendants by L2L~(\myref{Algorithm}{alg:L2L}) and the cumulative local fields of each leaf cluster are transferred to the force-field on its member particles by L2P~(\myref{Algorithm}{alg:L2P}).
\end{compactenum}
A schematic comparison of treecode~\cite{wang2020kernel} and FMM is illustrated in \myref{Figure}{fig:fmm_schematic_treecode_vs_fmm}. In the treecode (\myref{Figure}{fig:fmm_schematic_treecode_vs_fmm_a}), we interpolate the source variable $\bm{x}_j$ of the kernel function so that we can cluster the source particles and build up the effective masses of each cluster. The force-field of each particle in the target cluster is evaluated by each independent traversal of the source cluster tree and by investigating the particle-cluster interaction. In the FMM~(\myref{Figure}{fig:fmm_schematic_treecode_vs_fmm_b}), we interpolate both target variable $\bm{x}_i$ and source variable $\bm{x}_j$ of the kernel function so that the target particles and source particles can be clustered and the corresponding effective force-fields and masses of each cluster can be built up. The force-field on a target particle is transferred from the effective force-fields of the clusters, which are computed by cluster-cluster interaction through the traversal of the target cluster tree and the source cluster tree simultaneously.   
\begin{figure}[h]
\centering
\begin{subfigure}[b]{0.495\textwidth}
    \includegraphics[width=\linewidth]{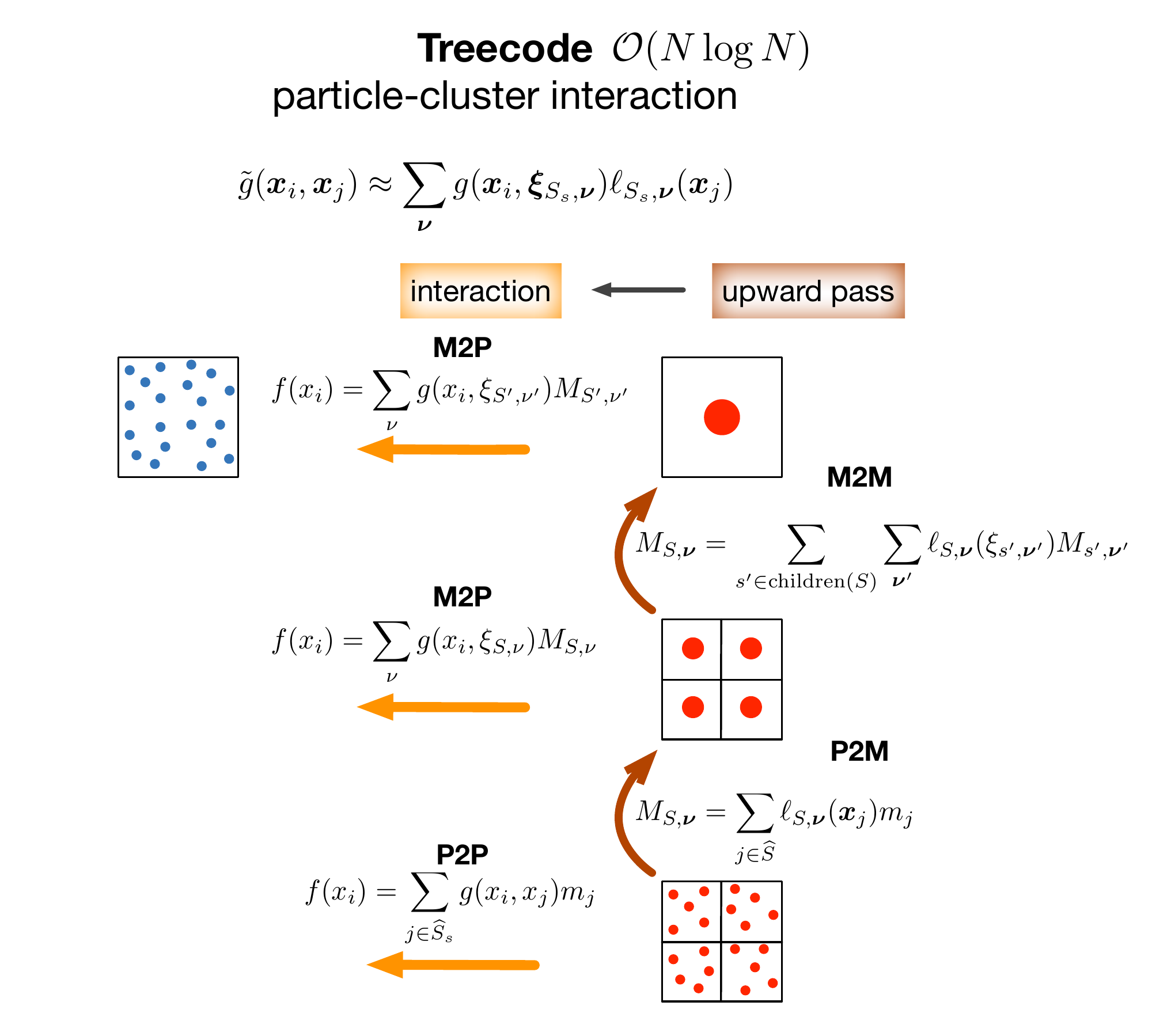}
    \caption{\label{fig:fmm_schematic_treecode_vs_fmm_a}Treecode}
\end{subfigure}
\begin{subfigure}[b]{0.495\textwidth}
    \includegraphics[width=\linewidth]{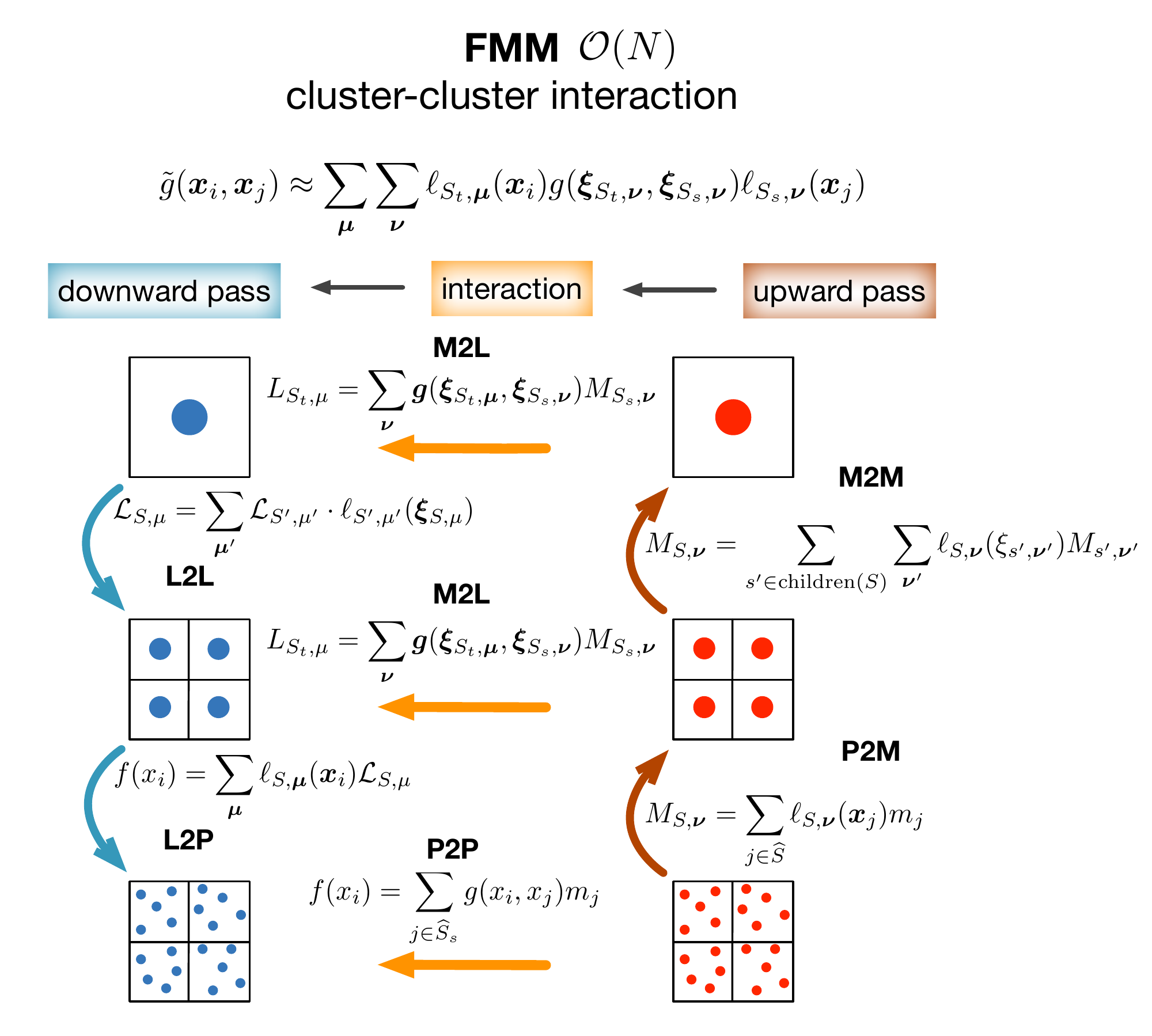}
    \caption{\label{fig:fmm_schematic_treecode_vs_fmm_b}FMM}
\end{subfigure}
\caption{A schematic comparison of treecode and FMM. The blue dots and red dots indicate the target and source particles, respectively.}
\label{fig:fmm_schematic_treecode_vs_fmm}
\end{figure}

The pseudocodes for the algorithms are presented with the following global variables:
\begin{center}
\begin{tabular}{ll}
	$\bm{E}_i$ & electric field experienced by the $i$-th particle,\\
	$\bm{B}_i$ & magnetic field experienced by the $i$-th particle,\\
	$\gamma_{S,\bm{\nu}}$ & effective Lorenz factor of a macro particle with index $\bm{\nu}$ in cluster $S$,\\ 
    $\bm{p}_{S,\bm{\nu}}$ & effective momentum of a macro particle with index $\bm{\nu}$ in cluster $S$,\\
    $\bm{E}_{S,\bm{\nu}}$ & total electric field acting on a macro particle with index $\bm{\nu}$ in cluster $S$,\\
    $\bm{B}_{S,\bm{\nu}}$ & total magnetic field acting on a macro particle with index $\bm{\nu}$ in cluster $S$.
\end{tabular}
\end{center}
The algorithms described above are implemented as a solver in the Julia programming language~\cite{bezanson2017julia}. The cluster tree constructed with \myref{Algorithm}{alg:fmm_subdivide_cluster} is implemented using a pointer-based data structure.

\begin{algorithm}\caption{Direction for the Split of Particle Cluster with Stretch}\label{alg:direction4split_stretching}
$S$: particle cluster\\
$\bm{s}$: stretch factor\\
\SetKwProg{Func}{Function}{}{end}
\Func{\normalfont{direction4split}$(S,\,\bm{s})$}{
    $(s_x,s_y,s_z)=\bm{s}$\\
    $[a_x,b_x]\times[a_y,b_y]\times[a_z,b_z]=\text{bbox}(S)$\\
    $k = \underset{i\in\{x,y,z\}}{\mathrm{argmax}}\,\,s_i(b_i - a_i)$\\
    \Return $k$
}
\end{algorithm}

\begin{algorithm}\caption{Subdivision of Particle Cluster}\label{alg:fmm_subdivide_cluster}
$S$: particle cluster\\
$\bm{s}$: stretch factor\\
$N_0$: maximum number of particles in the leaf node\\
\SetKwProg{Func}{Function}{}{end}
\Func{\normalfont{subdivide}$(S,\,\bm{s},\,N_0)$}{
  \eIf{$|S|>N_0$}{
    $k=\text{direction4split}(S,\bm{s})$ (\hyperref[alg:direction4split_stretching]{Algorithm~\ref*{alg:direction4split_stretching}})\\
    $k_{\text{split}}=$ $\lfloor |S|/2\rfloor$-th largest element of $\{k_i\mid i\in\widehat{S}\}$\\
    $S_1=\{\bm{x}_i \mid k_i\leq k_{\text{split}}, i\in\widehat{S}\}$\\
    $S_2=\{\bm{x}_i \mid k_i> k_{\text{split}}, i\in\widehat{S}\}$\\
    children(S) $= \{S_1,\,S_2\}$\\
    subdivide($S_1,\,\bm{s},\,N_0$)\\
    subdivide($S_2,\,\bm{s},\,N_0$)\\
  }{
    children(S) $= \emptyset$
  }
}
\end{algorithm}

\begin{algorithm}\caption{P2M}\label{alg:P2M}
$S$: particle cluster\\
\SetKwProg{Func}{Function}{}{end}
\Func{\normalfont{P2M}$(S)$}{
    $\gamma_{S,\bm{\nu}}=\sum_{j\in \widehat{S}}\ell_{S,\bm{\nu}}(\bm{x}_j)\gamma_{j}$ \\
    $\bm{p}_{S,\bm{\nu}}=\sum_{j\in \widehat{S}}\ell_{S,\bm{\nu}}(\bm{x}_j)\bm{p}_{j}$
}
\end{algorithm}

\begin{algorithm}\caption{M2M}\label{alg:M2M}
$S$: parent particle cluster\\
$S'$: child particle cluster\\
\SetKwProg{Func}{Function}{}{end}
\Func{\normalfont{M2M}$(S,S')$}{
    $\gamma_{S,\bm{\nu}}=\sum_{\bm{\nu}'}\ell_{S,\bm{\nu}}(\xi_{S',\bm{\nu}'})\gamma_{S',\bm{\nu}'}$\\
    $\bm{p}_{S,\bm{\nu}}=\sum_{\bm{\nu}'}\ell_{S,\bm{\nu}}(\xi_{S',\bm{\nu}'})\gamma_{S',\bm{\nu}'}$
}
\end{algorithm}

\begin{algorithm}\caption{Upward Pass}\label{alg:upwardpass}
$S$: particle cluster\\
\SetKwProg{Func}{Function}{}{end}
\Func{\normalfont{upwardpass}$(S)$}{
  \eIf{\normalfont{children}$(S) == \emptyset$}{
    $\text{P2M}(S)$ (\myref{Algorithm}{alg:P2M})\\
  }{
    \For{$s \in \text{\normalfont{children}}(S)$}{
       $\text{upwardpass}(s)$\\
    }
    \For{$s \in \text{\normalfont{children}}(S)$}{
       $\text{M2M}(S,s)$ (\myref{Algorithm}{alg:M2M})\\
    }
  }
}
\end{algorithm}

\begin{algorithm}\caption{Stretched Diameter of Cluster}\label{alg:stretched_diam}
$S$: particle cluster\\
$\bm{s}$: stretch factor\\
\SetKwProg{Func}{Function}{}{end}
\Func{\normalfont{diam}$(S, \bm{s})$}{
  $[a_x,b_x]\times[a_y,b_y]\times[a_z,b_z]=\text{bbox}(S)$\\
  \Return $\Vert\bm{s}\circ(\bm{a} - \bm{b})/2\Vert_2$
}
\end{algorithm}

\begin{algorithm}\caption{Stretched Distance between two Clusters}\label{alg:stretched_dist}
$S_1$: particle cluster 1\\
$S_2$: particle cluster 2\\
$\bm{s}$: stretch factor\\
\SetKwProg{Func}{Function}{}{end}
\Func{\normalfont{dist}$(S_1, S_2, \bm{s})$}{
  $[a_{1,x},b_{1,x}]\times[a_{1,y},b_{1,y}]\times[a_{1,z},b_{1,z}]=\text{bbox}(S_1)$\\
   $[a_{2,x},b_{2,x}]\times[a_{2,y},b_{2,y}]\times[a_{2,z},b_{2,z}]=\text{bbox}(S_2)$\\
  $\bm{c}_1 = (\bm{a}_1 + \bm{b}_1)/2$\\
  $\bm{c}_2 = (\bm{a}_2 + \bm{b}_2)/2$\\
  \Return $\Vert\bm{s}\circ(\bm{c}_1-\bm{c}_2)\Vert_2$
}
\end{algorithm}

\begin{algorithm}\caption{Stretched Admissibility Condition for Cluster-Cluster Interaction}\label{alg:stretched_admissible_CC}
$S_1$: particle cluster 1\\
$S_2$: particle cluster 2\\
$\bm{s}$: stretch factor\\
$\eta$: admissibility parameter\\
\SetKwProg{Func}{Function}{}{end}
\Func{\normalfont{admissible}$(S_{1},\,S_{2},\bm{s},\eta)$}{
   $r_1 = \text{diam}(S_1, \bm{s})$ (\myref{Algorithm}{alg:stretched_diam})\\
   $r_2 = \text{diam}(S_2, \bm{s})$ (\myref{Algorithm}{alg:stretched_diam})\\
   $d = \text{dist}(S_1,S_2,\bm{s})$ (\myref{Algorithm}{alg:stretched_dist})\\
   \Return $\max(r_1, r_2)/d < \eta$
}
\end{algorithm}

\begin{algorithm}\caption{M2L}\label{alg:M2L}
$S_t$: target particle cluster\\
$S_s$: source particle cluster\\
\SetKwProg{Func}{Function}{}{end}
\Func{\normalfont{M2L}$(S_t, S_s)$}{
  $\bm{E}_{S_t,\bm{\mu}} = \sum_{\bm{\nu}}\gamma_{S_s,\bm{\nu}}\cdot\bm{g}(\xi_{S_t,\bm{\mu}}, \xi_{S_s,\bm{\nu}})$\\
  $\bm{B}_{S_t,\bm{\mu}} = \sum_{\bm{\nu}}\bm{p}_{S_s,\bm{\nu}}\times\bm{g}(\xi_{S_t,\bm{\mu}}, \xi_{S_s,\bm{\nu}})$
}
\end{algorithm}

\begin{algorithm}\caption{P2P}\label{alg:P2P}
$S_t$: target particle cluster\\
$S_s$: source particle cluster\\
\SetKwProg{Func}{Function}{}{end}
\Func{\normalfont{P2P}$(S_t, S_s)$}{
  \For{$i\in \hat{S_t}$}{
    \For{$j\in \hat{S_s}$}{
      $\bm{E}_i = \bm{E}_i + \gamma_j \bm{g}(\bm{x}_i,\bm{x}_j,\bm{p}_j)$\\
      $\bm{B}_i = \bm{B}_i + \bm{p}_j \times\bm{g}(\bm{x}_i,\bm{x}_j,\bm{p}_j)$
    }
  }
}
\end{algorithm}

\begin{algorithm}\caption{Cluster-Cluster Interaction by Dual Tree Traversal}\label{alg:dualtraverse}
$S_t$: target particle cluster\\
$S_s$: source particle cluster\\
$\bm{s}$: stretch factor\\
$\eta$: admissibility parameter\\
\SetKwProg{Func}{Function}{}{end}
\Func{\normalfont{dualtraverseinteract}$(S_t, S_s, \bm{s}, \eta)$}{
  \eIf{\normalfont{children}$(S_t) == \emptyset$ $\land$  \normalfont{children}$(S_s) == \emptyset$}{
    $\text{P2P}(S_t, S_s)$ (\myref{Algorithm}{alg:P2P})\\
  }{
    $\text{isAdmissible} = \text{admissible}(S_{t},\,S_{s},\bm{s},\eta)$ (\myref{Algorithm}{alg:stretched_admissible_CC})\\
    \uIf{\normalfont{isAdmissible}}{
      \normalfont{M2L}$(S_{t},\,S_{s})$ (\myref{Algorithm}{alg:M2L})
    }\uElseIf{\normalfont{children}$(S_t) == \emptyset$}{
      \For{$s\in \text{\normalfont{children}}(S_s)$}{
        $\text{dualtraverseinteract}(S_t, s, \bm{s}, \eta)$
      }
    }\uElseIf{\normalfont{children}$(S_s) == \emptyset$}{
      \For{$t\in \text{\normalfont{children}}(S_t)$}{
        $\text{dualtraverseinteract}(t, S_s, \bm{s}, \eta)$
      }
    }\Else{
      \eIf{\normalfont{diam}$(S_t,\bm{s})$ $>$ \normalfont{diam}$(S_s,\bm{s})$}{
        \For{$t\in \text{\normalfont{children}}(S_t)$}{
          $\text{dualtraverseinteract}(t, S_s, \bm{s}, \eta)$
        }
      }{
        \For{$s\in \text{\normalfont{children}}(S_s)$}{
          $\text{dualtraverseinteract}(S_t, s, \bm{s}, \eta)$
        }
      }
    } 
  }
}
\end{algorithm}

\begin{algorithm}\caption{FMM with Stretch}\label{alg:FMM}
$S$: particle cluster\\
$\bm{s}$: stretch factor\\
$N_0$: maximum number of particles in the leaf
node\\
$\eta$: admissibility parameter\\
\SetKwProg{Func}{Function}{}{end}
\Func{\normalfont{FMM}$(S,\,\bm{s},\,N_{0},\,\eta)$}{
    \normalfont{subdivide}$(S,\,\bm{s},\,N_{0})$ (\myref{Algorithm}{alg:fmm_subdivide_cluster})\\
    $\bm{E}_i=\bm{0}, \bm{B}_i=\bm{0}\quad \forall i\in \widehat{S}$ \\
    $\text{upwardpass}(S)$ (\myref{Algorithm}{alg:upwardpass})\\ 
    $\text{dualtraverseinteract}(S,S,\bm{s},\eta)$ (\myref{Algorithm}{alg:dualtraverse})\\
    $\text{downwardpass}(S)$ (\myref{Algorithm}{alg:downwardpass})\\
}
\end{algorithm}

\begin{algorithm}\caption{L2L}\label{alg:L2L}
$S$: parent particle cluster\\
$S'$: child particle cluster\\
\SetKwProg{Func}{Function}{}{end}
\Func{\normalfont{L2L}$(S',S)$}{
    $\bm{E}_{S',\bm{\mu}'}=\bm{E}_{S',\bm{\mu}'}+\sum_{\bm{\mu}}\ell_{S,\bm{\mu}}(\xi_{S',\bm{\mu}'})\bm{E}_{S,\bm{\mu}}$\\
    $\bm{B}_{S',\bm{\mu}'}=\bm{B}_{S',\bm{\mu}'}+\sum_{\bm{\mu}}\ell_{S,\bm{\mu}}(\xi_{S',\bm{\mu}'})\bm{B}_{S,\bm{\mu}}$
}
\end{algorithm}

\begin{algorithm}\caption{L2P}\label{alg:L2P}
$S$: particle cluster\\
\SetKwProg{Func}{Function}{}{end}
\Func{\normalfont{L2P}$(S)$}{
  \For{$i\in \hat{S}$}{
    $\bm{E}_i = \bm{E}_i + \sum_{\bm{\mu}}\ell_{S,\bm{\mu}}(\bm{x}_i)\bm{E}_{S,\bm{\mu}}$\\
    $\bm{B}_i = \bm{B}_i + \sum_{\bm{\mu}}\ell_{S,\bm{\mu}}(\bm{x}_i)\bm{B}_{S,\bm{\mu}}$
  }
}
\end{algorithm}

\begin{algorithm}\caption{Downward Pass}\label{alg:downwardpass}
$S$: particle cluster\\
\SetKwProg{Func}{Function}{}{end}
\Func{\normalfont{downwardpass}$(S)$}{
  \eIf{\normalfont{children}$(S) == \emptyset$}{
    $\text{L2P}(S)$ (\myref{Algorithm}{alg:L2P})\\
  }{
    \For{$s \in \text{\normalfont{children}}(S)$}{
       $\text{L2L}(s,S)$ (\myref{Algorithm}{alg:L2L})\\
    }
    \For{$s \in \text{\normalfont{children}}(S)$}{
       $\text{downwardpass}(s)$\\
    }
  }
}
\end{algorithm}

\section{Results}\label{sec:fmm_result}
To understand the performance of the proposed FMM, we first demonstrate a plot of elapsed time against the error for the simulations with different FMM parameters in~\myref{Figure}{fig:fmm_etime_vs_error}. In each simulation, $1.28\times 10^6$ particles are randomly uniformly distributed in the unit cube $[0,1]^3$ and each particle has the same momentum $\bm{p}=(0,0,p_0)$ with $p_0=(\gamma^2-1)^{1/2}$ and $\gamma=50$. The measured error is the maximal relative error in the electrical and magnetic fields,
\begin{equation}\label{eq:fmm_error}
    \text{error}:=
      \max_{\bm{f}\in\{\bm{E},\bm{B}\}}
      \biggl(\sum^{N}_{i=1} \Vert \bm{f}^{t}_{i}-\bm{f}^{b}_{i} \Vert^{2}_{2} 
      / \sum^{N}_{i=i}\Vert \bm{f}^{b}_{i} \Vert^{2}_{2}\biggr)^{1/2},
\end{equation}
where $N$ is the number of particles in the system. The space-charge fields $\bm{f}^{t}_{i}$ and $\bm{f}^{b}_{i}$ experienced by the $i$-th particle are computed by FMM and a brute-force method (~\myref{Algorithm}{alg:P2P} with $S_t=S_s=S$ and $i \neq j$), respectively. We can observe that a smaller admissibility parameter $\eta$~\eqref{eq:fmm_stretched_admc} leads to higher accuracy (smaller error) but costs more elapsed time. This is because fewer M2Ls in the coarse level are performed and each non-admissible pair of clusters in the coarse level can result in many M2Ls in the finer level or P2Ps in the leaf level. 

The usage of a higher interpolation degree $n$ leads to a result with higher accuracy and higher elapsed time because more macro particles are used in the calculation of M2L. In \myref{Figure}{fig:fmm_performance}, the elapsed time of FMM against the number of particles $N$ is presented. We can see that our FMM approaches the theoretical complexity $\mathcal{O}(N)$ as the number of particles $N$ becomes big enough (\myref{Figure}{fig:fmm_performance_b}).   
\begin{figure}[h]
    \centering
    \includegraphics[width=0.65\linewidth]{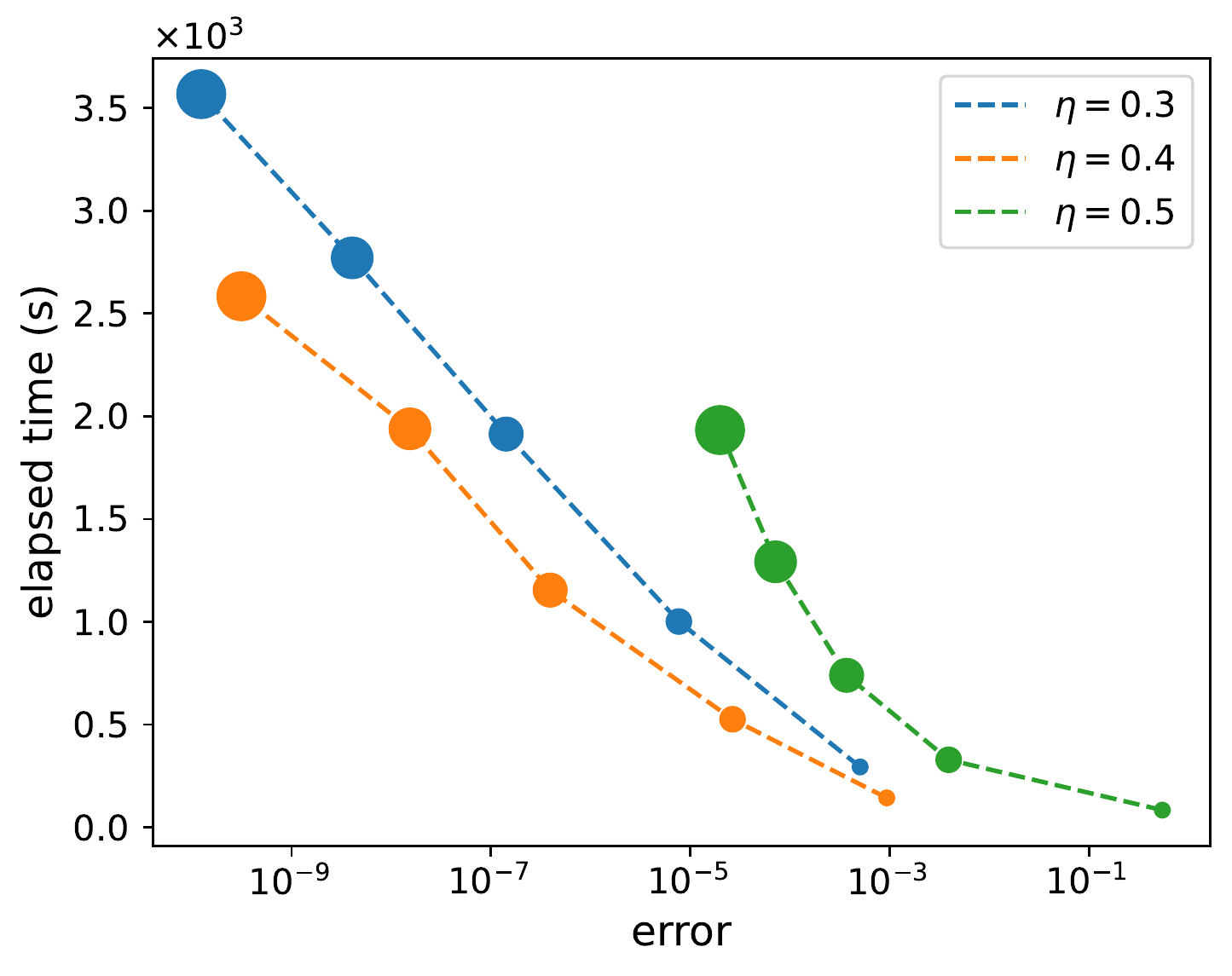}
    \caption{A plot of elapsed time against the error~\eqref{eq:fmm_error} for the proposed FMM. Each line represents the result computed with an admissibility parameter $\eta=0.3,0.4,0.5$. Each point in a line represents a simulation with an interpolation degree $n=2,4,6,8,10$ and the maximum number of particles in the leaf cluster $N_0=(n+1)^3$. The point size is associated with the value of $n$; data with bigger $n$ is expressed with bigger point size.}
    \label{fig:fmm_etime_vs_error}
\end{figure}
\begin{figure}[h]
\centering
\begin{subfigure}[b]{0.49\textwidth}
    \includegraphics[width=\linewidth]{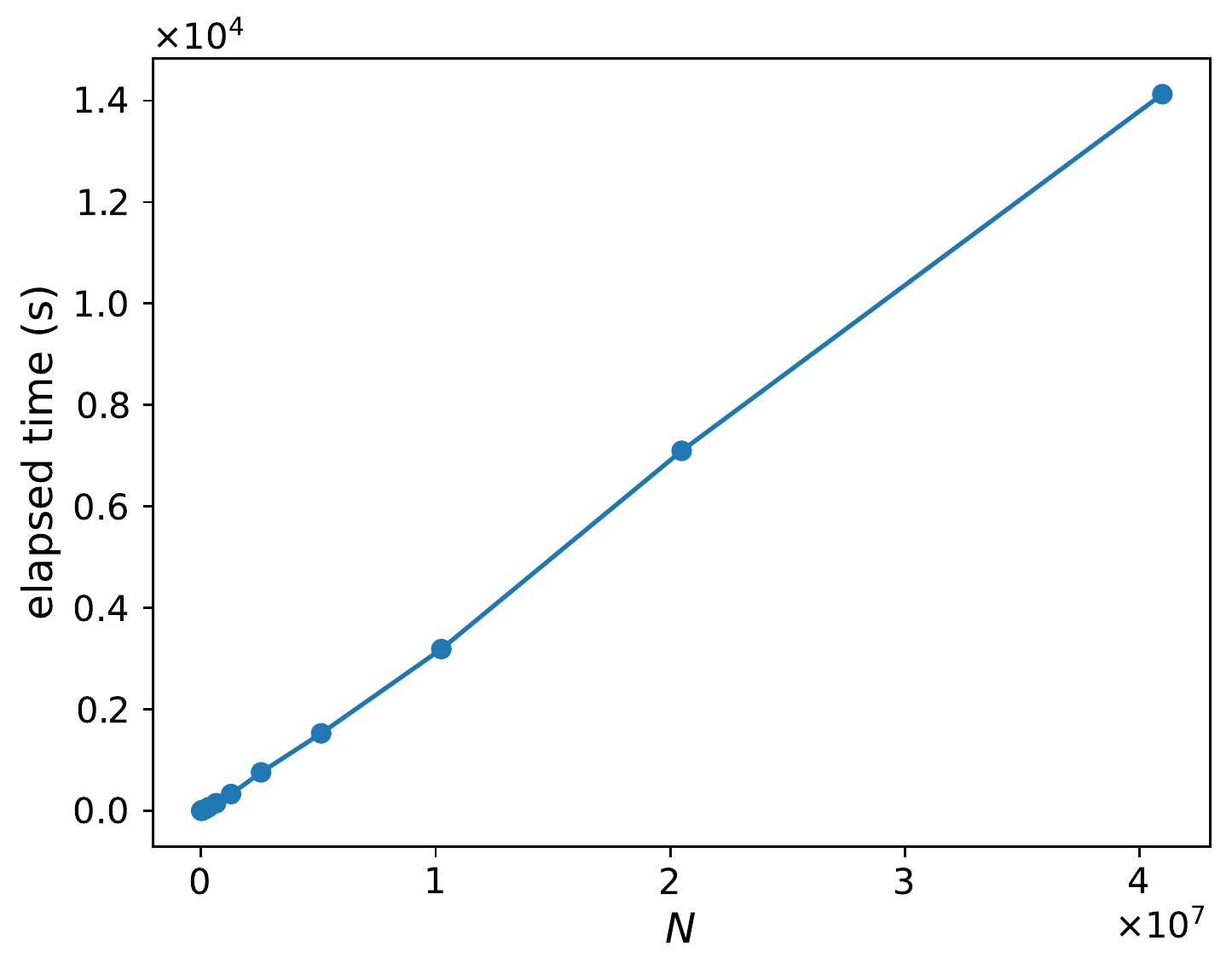}
    \caption{\label{fig:fmm_performance_a}elapsed time}
\end{subfigure}
\begin{subfigure}[b]{0.49\textwidth}
    \includegraphics[width=\linewidth]{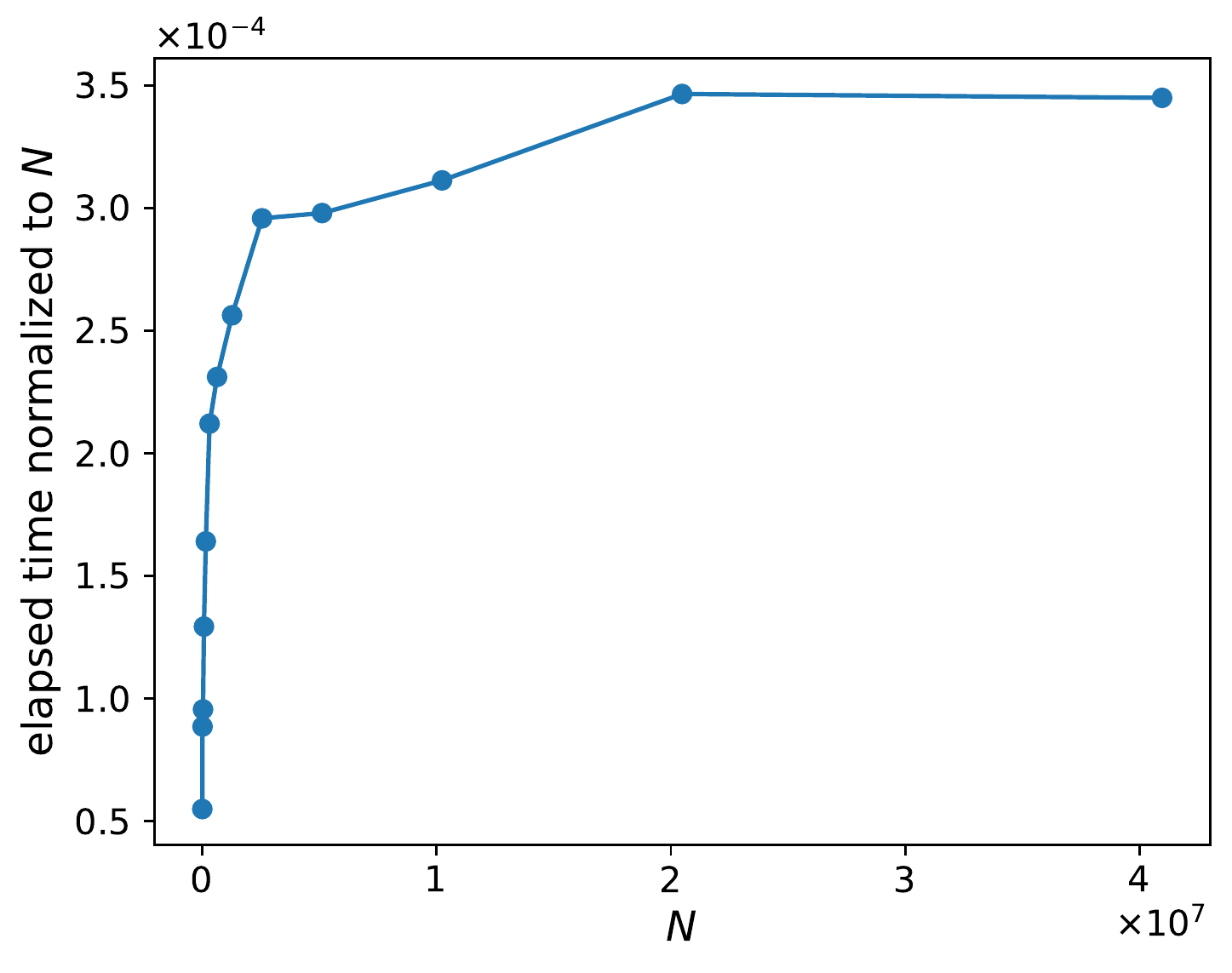}
    \caption{\label{fig:fmm_performance_b}normalized elapsed time}
\end{subfigure}
\caption{The performance of the FMM method. \myref{Figure}{fig:fmm_performance_a} shows the elapsed time used by FMM to evaluate the space-charge field of increasing numbers of particles $N$ with interpolation degree $n=4$, the maximum number of particles in the leaf cluster $N_0=(n+1)^3$ and admissibility parameter $\eta=0.5$. \myref{Figure}{fig:fmm_performance_b} shows the elapsed time normalized to $N$.}
\label{fig:fmm_performance}
\end{figure}

We also perform code profiling on our solver and demonstrate the cumulative elapsed time of the six FMM kernels in~\myref{Figure}{fig:fmm_routines_profiling}. One can observe that the total elapsed time is mostly dominated by P2P and M2L; this indicates that the routines of these two kernels will be the focus when any further optimizations for the solver are considered. Besides, one can also observe a sudden jump in the value of the elapsed times for P2P and M2L at a specific number of particles $N$. To understand this phenomenon, we consider a case where the total particle number is equal to a transition value $N=N^{\kappa}_{t}:=2^{\kappa}N_0$ with $\kappa$ the depth of the cluster tree. If each leaf cluster interacts with at most a constant number of clusters via P2P, the total number of operation counts to perform P2P can be written as
\begin{equation}\label{eq:fmm_performance_model_p2p_1}
    W_{\text{P2P}}(N)=\text{const}\cdot2^{\kappa}\cdot N^{2}_{0}\quad\text{for }N=N^{\kappa}_t.
\end{equation}
When the number of particles $N$ slightly increases with $\delta N\to 0$ so that $N > N^{\kappa}_t$, the number of particles in each leaf cluster $N_{\text{f}}$ will slightly increase with $\delta N_{\text{f}}\to 0$ so that $N_{\text{f}} > N_0$. In this case, each leaf cluster will be subdivided into two clusters and the cluster tree will gain one more level $\kappa+1$. Therefore, the number of leaf clusters will increase from $2^{\kappa}$ to $2^{\kappa+1}$ and the value of $N_{\text{f}}$ reduces from $N_0$ to $N_0/2$. Thus, the total number of operation counts for P2P can be written as
\begin{equation}\label{eq:fmm_performance_model_p2p_2}
    W_{\text{P2P}}(N)=\text{const}\cdot2^{\kappa+1}\cdot (\tfrac{N_0}{2} + \delta N_{\text{f}})^2 \quad\text{for }N^{\kappa+1}_t \geq N>N^{\kappa}_t,
\end{equation}
with $N_0/2 \geq \delta N_{\text{f}}(N,N_0) > 0 $. The ratio between \eqref{eq:fmm_performance_model_p2p_1} and \eqref{eq:fmm_performance_model_p2p_2} for different $\delta N_{\text{f}}$ is
\begin{equation*}
\dfrac{W_{\text{P2P}}(N^{\kappa}_t+\delta N)}{W_{\text{P2P}}(N^{\kappa}_t)}
=
\begin{cases}
    1\quad\text{for }\delta N_{\text{f}} = 0,\\
    \frac{1}{2}\quad\text{for }\delta N_{\text{f}}\to 0,\\
    2\quad\text{for } \delta N_{\text{f}}=\frac{N_0}{2}.
\end{cases}
\end{equation*}
Together with~\eqref{eq:fmm_performance_model_p2p_2}, we can see that $W_{\text{P2P}}$ suddenly decreases to one half of $W_{\text{P2P}}(N^{\kappa}_t)$ as $N$ increases from $N=N^{\kappa}_{t}$ and then grows quadratically until it is two times bigger than $W_{\text{P2P}}(N^{\kappa}_t)$ at $N=N^{\kappa+1}_{t}$. This performance model can describe the trend of elapsed time for P2P. Likewise, we can also apply a similar analysis to M2L and write down the corresponding performance model as
\begin{equation}\label{eq:fmm_performance_model_m2l}
W_{\text{M2L}}(N)
=
\begin{cases}
\text{const}\cdot 2^{\kappa+1}\cdot (n+1)^6 \quad N=N^{\kappa+1}_t,\\
\text{const}\cdot 2^{\kappa+2}\cdot (n+1)^6 \quad N^{\kappa+1}_t \geq N>N^{\kappa}_t.
\end{cases}
\end{equation}
Here, we use the fact that a balanced cluster tree with the depth $l$ contains $2^{l+1}$ total clusters and the assumption that each cluster interacts with at most a constant value of clusters through M2L. \myref{Equation}{eq:fmm_performance_model_m2l} shows $W_{\text{M2L}}$ suddenly increases to two times of $W_{\text{M2L}}(N^{\kappa}_t)$ as $N$ slightly increases with $\delta N\to 0$ from $N=N^{\kappa}_t$; and then it remains constant whenever $N^{\kappa+1}_t \geq N>N^{\kappa}_t$. This performance model can successfully explain the behavior of the elapsed time for M2L. 

\begin{figure}[H]
    \centering
    \includegraphics[width=0.6\linewidth]{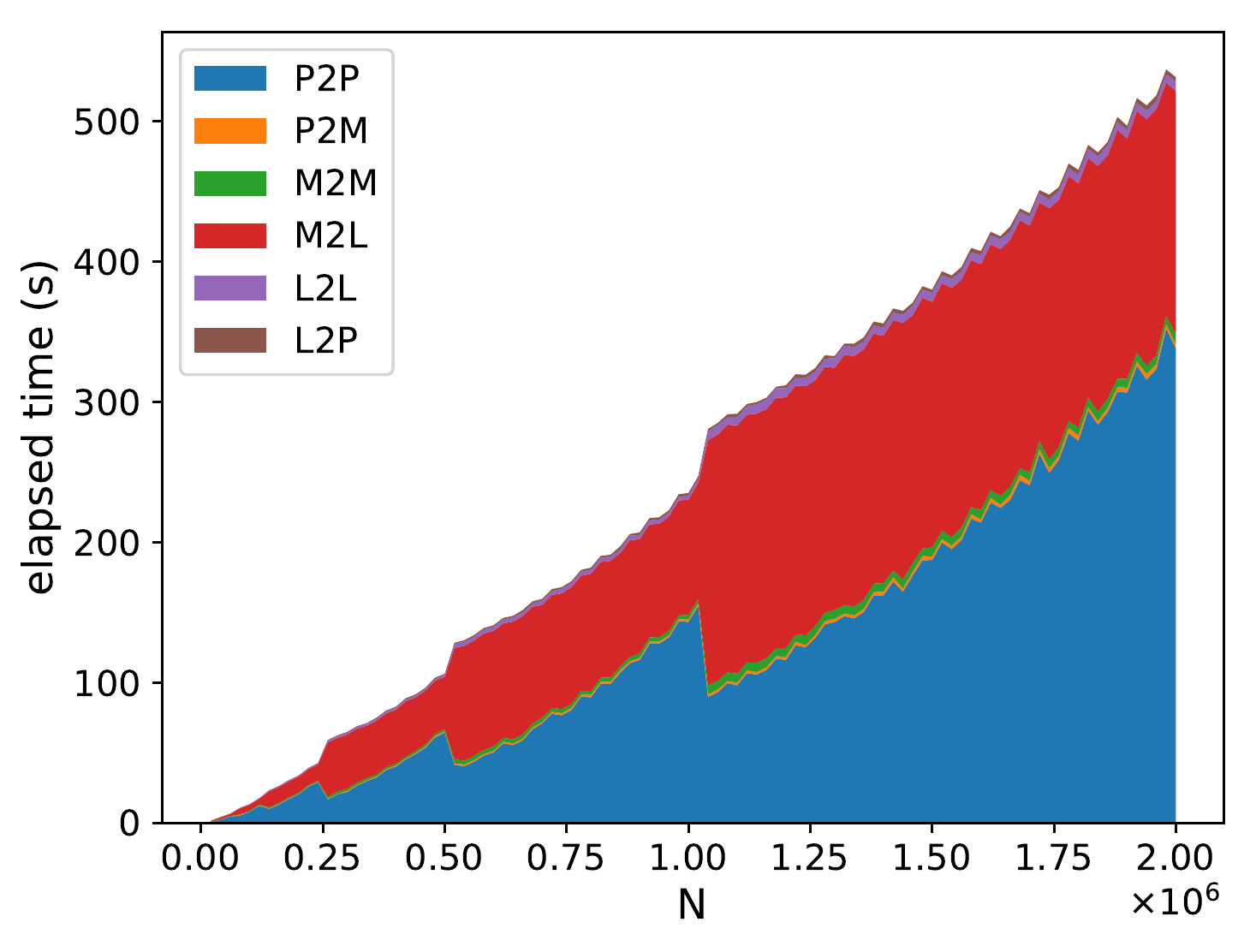}
    \caption{Elapsed time of the six FMM kernels against the increasing number of particles $N$ with interpolation degree $n=4$, the maximum number of particles in the leaf cluster $N_0=(n+1)^3$ and admissibility parameter $\eta=0.5$.}
    \label{fig:fmm_routines_profiling}
\end{figure}

\section{GPU Parallelization}
As illustrated in \myref{Algorithm}{alg:dualtraverse}, our FMM is based on the dual tree traversal. The dual tree traversal could not be naively parallelized in data parallelism and might not benefit from GPUs. For one thing, the power of GPUs comes from executing multiple simple tasks through multiple threads in SIMD (single instruction, multiple data); for another thing, a single GPU core usually has weaker computing power than a single CPU core. Therefore, hybrid CPU-GPU approaches based on the creation of the interaction lists by CPU were investigated in some former works~\cite{wilson2021gpu-accelerated,liu2019efficient}. In this approach, the CPU first performs a dual tree traversal to generate interaction lists; and then, the GPU handles the interaction of each pair of clusters in the interaction lists. In this study, we refer to the work proposed by Wilson~\emph{et al.}~\cite{wilson2021gpu-accelerated,wilson2021development} and discuss a GPU parallelization for our proposed FMM. The CPU-GPU execution of the proposed FMM can be summarized in the 10 steps listed in~\myref{Algorithm}{alg:fmmgpu_outline}. The H2D and D2H denote the data transfers of “host to device” and “device to host”, respectively. As shown in~\myref{Figure}{fig:fmm_routines_profiling}, the execution of FMM spends most of the time on the interaction phases (P2P and M2L). Although the parallelization of each FMM kernel is implemented in our application, we will only focus on the implementations of P2P and M2L (\emph{i.e.}, step 6 and step 8 in~\myref{Algorithm}{alg:fmmgpu_outline}) in later discussions.
{\LinesNumbered
\begin{algorithm}[H]\caption{The CPU-GPU Ecxecution of the Proposed Parallelized FMM}
\textcolor{red}{CPU}: generate particles information $\bm{x}_{i}$, $\bm{p}_{i}$ and allocate $\bm{E}_{i}$, $\bm{B}_{i}$ \\
\textcolor{red}{CPU}: create cluster tree $S$ with $\bm{x}_{i}$, $\bm{p}_{i}$\\
\textcolor{red}{H}2\textcolor{blue}{D}: copy $\bm{x}_{i}$, $\bm{p}_{i}$ and $S$ to device\\
\textcolor{blue}{GPU}: allocate $\gamma_{S,\bm{\nu}}$, $\bm{p}_{S,\bm{\nu}}$, $\bm{E}_{S,\bm{\nu}}$, $\bm{B}_{S,\bm{\nu}}$, $\bm{E}_{i}$, $\bm{B}_{i}$ \\
\textcolor{blue}{GPU}: perform upwardpass with $S$, $\bm{x}_{i}$, $\bm{p}_{i}$ to compute $\gamma_{S,\bm{\nu}}$, $\bm{p}_{S,\bm{\nu}}$ \\
\textcolor{red}{CPU}: perform dual tree traversal on $S$ to build interaction lists (ITLs)\\
\textcolor{red}{H}2\textcolor{blue}{D}: copy ITLs to device \\
\textcolor{blue}{GPU}: performe P2P, M2L with ITLs to compute $\bm{E}_{S,\bm{\nu}}$, $\bm{B}_{S,\bm{\nu}}$ \\
\textcolor{blue}{GPU}: perform downwardpass with $\bm{E}_{S,\bm{\nu}}$, $\bm{B}_{S,\bm{\nu}}$ to compute $\bm{E}_{i}$, $\bm{B}_{i}$ \\
\textcolor{blue}{D}2\textcolor{red}{H}: copy $\bm{E}_{i}$, $\bm{B}_{i}$ to host
\label{alg:fmmgpu_outline}
\end{algorithm}}

\section{Array-based Tree Data Structure}\label{sec:fmm_array_tree}
In the implementation, it might be straightforward to express the cluster tree with a pointer-based data structure; that is, each node object (particle cluster in our case) contains data fields and a pointer, and this pointer is used to allocate the objects of children nodes. One major disadvantage of using a pointer-based tree is that the node objects are not stored in contiguous locations in the memory, and this makes the data transfer between host and device difficult. Hence, it might be beneficial to consider an array-based tree in the GPU application. Following the approach in Ref.~\cite{hwu2011GPU_ch6}, we use multiple arrays to store node objects with multiple members, one array for one member. A member of a node object with index $i$ is located in the $i$-th element of the corresponding array. Besides, two additional arrays are respectively used to specify the parent index and children indices of nodes. Because our cluster tree is constructed through a k-d tree with cardinality-balanced subdivision of particles, it will be a balanced binary tree. Hence, we will narrow our following discussions to balanced binary tree. 

Although there may exist several possibilities, we adapt the breadth-first scheme to assign the node index of a tree. With this index assignment scheme, the nodes in the level $l$ have the indices $\{2^{l},\dots,2^{(l+1)-1}\}$; and similarly, a node with the index $i$ belongs to a level $\lfloor\log i/\log 2\rfloor$. Here, we define that the level of cluster tree starts from $0$ and the node index starts from $1$. The breadth-first scheme can ensure that the member data of nodes from the same level stays contiguously in an array. This data arrangement is cache-friendly for both the upward pass (P2M and M2M) and the downward pass (L2L and L2P) where the whole member data of nodes from the same level will be accessed for the calculation. Therefore, the parent index and the children pair of indices for a node with index $i$ are defined as
\begin{equation*}
    \text{iparent}(i)=
    \begin{cases}
    -1\quad &i=1, \\
    \left\lfloor\frac{i}{2}\right\rfloor\quad &i \neq 1,\\
    \end{cases}
    \quad
    \text{and}
    \quad
    \text{ichildren}(i)=
    \begin{cases}
    (-1,-1)\quad &\text{if leaf node},\\
    (2\cdot i,2\cdot i+1)\quad &\text{else.}
    \end{cases}
\end{equation*}
A schematic representation of our array-based tree is provided in~\myref{Figure}{fig:array_tree}.
Since our tree is balanced (due to the cardinality-balanced subdivision scheme), we can preallocate a fixed-size array by knowing that the total number of nodes is $2^{(\kappa+1)}-1$ with the tree-depth
\begin{equation*}
\kappa =
\begin{cases}
0                                 & N \leq N_0, \\
\lceil \log (N/N_0)/\log 2 \rceil & N > N_0.
\end{cases}
\end{equation*}
It is worth noting that the resulting tree might not be balanced if other space-subdivision schemes are adapted. In this case, one may preallocate a big enough array that each node contains the children nodes of a maximum possible number. However, this causes large memory of unused nodes and leads to poor load-balancing across multiple ranks when MPI parallelization is considered~\cite{wilson2022private_communication}. One possible way to work around this issue is first creating a pointer-based tree, and an array-based tree can be allocated based on the information from that. This approach is adapted by some solvers, \emph{e.g.}, BaryTree~\cite{BaryTree}.
\begin{figure}[h]
    \centering
    \includegraphics[width=0.6\textwidth]{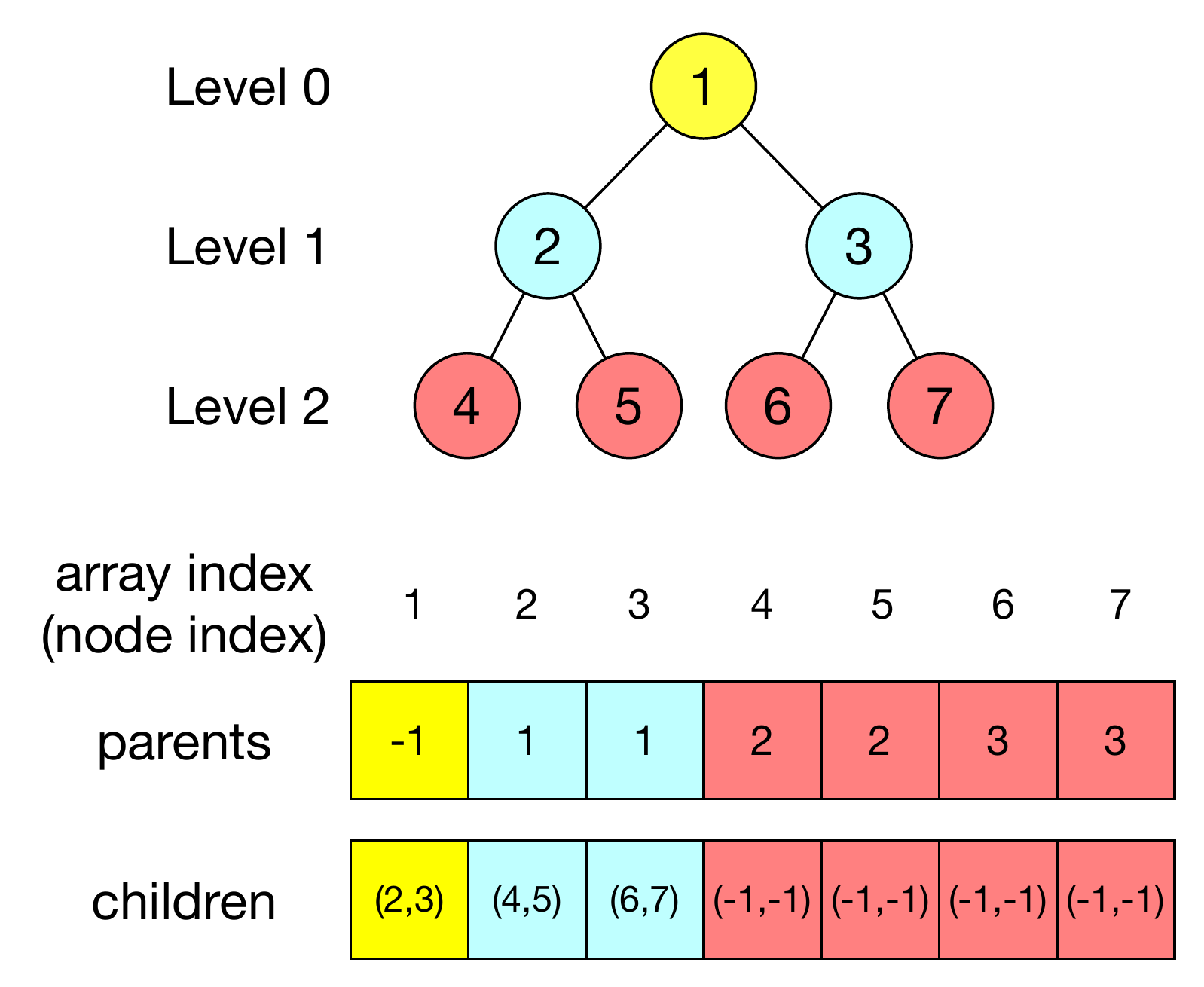}
    \caption{An array-representation of a balanced binary tree. The node index is assigned with the breadth-first scheme.}
    \label{fig:array_tree}
\end{figure}

\section{Parallelization of P2P and M2L Kernels}
Two lists of interaction pairs respectively for P2P and M2L, called interaction lists (ITLs), are generated by the execution of the dual tree traversal (\myref{Algorithm}{alg:dualtraversefillitl}) in the CPU and copied into GPU. The GPU kernels respectively of P2P and M2L are launched in a way that each interaction pair is handled by a thread block. As P2P and M2L are both similar to a direct summation algorithm, their GPU implementations are straightforward; one thread in the threads block handles the evaluation of the force-field of one target \mbox{micro/macro} particle (P2P/M2L). The GPU parallelization of P2P is illustrated in~\myref{Figure}{fig:fmmgpu_P2P_kernel}. In our implementation, we use an additional array to store the indices of all particles in the system and the indices of particles from a cluster will always stay in a contiguous memory block in the array during the subdivision (cf. \myref{Section}{appendx:treecode_clustertree_datastructure}). However, each member data of particles (\emph{e.g.,} positions and momenta) from a cluster accessed through this particle-indices array does not necessarily stay contiguous in its array (\myref{Figure}{fig:fmmgpu_P2P_kernel_a}). Thus, a member data of source particles accessed by threads is non-contiguously distributed in an array. This can slow down the application because the data access is not cache-friendly and requires frequent access from the global memory. One way to remedy this problem is using the shared memory (\myref{Figure}{fig:fmmgpu_cuda_gpu}) provided in CUDA-capable GPUs: we first load each member data of particles from a source cluster to shared memory so that the data can be accessed much faster by threads (\myref{Figure}{fig:fmmgpu_P2P_kernel_b}). For one thing, each member data of source particles stays in a contiguous block in the shared memory. For another, the shared memory is on-chip memory and has much lower latency than the global memory.
\begin{figure}[h]
    \centering
    \includegraphics[width=0.6\linewidth]{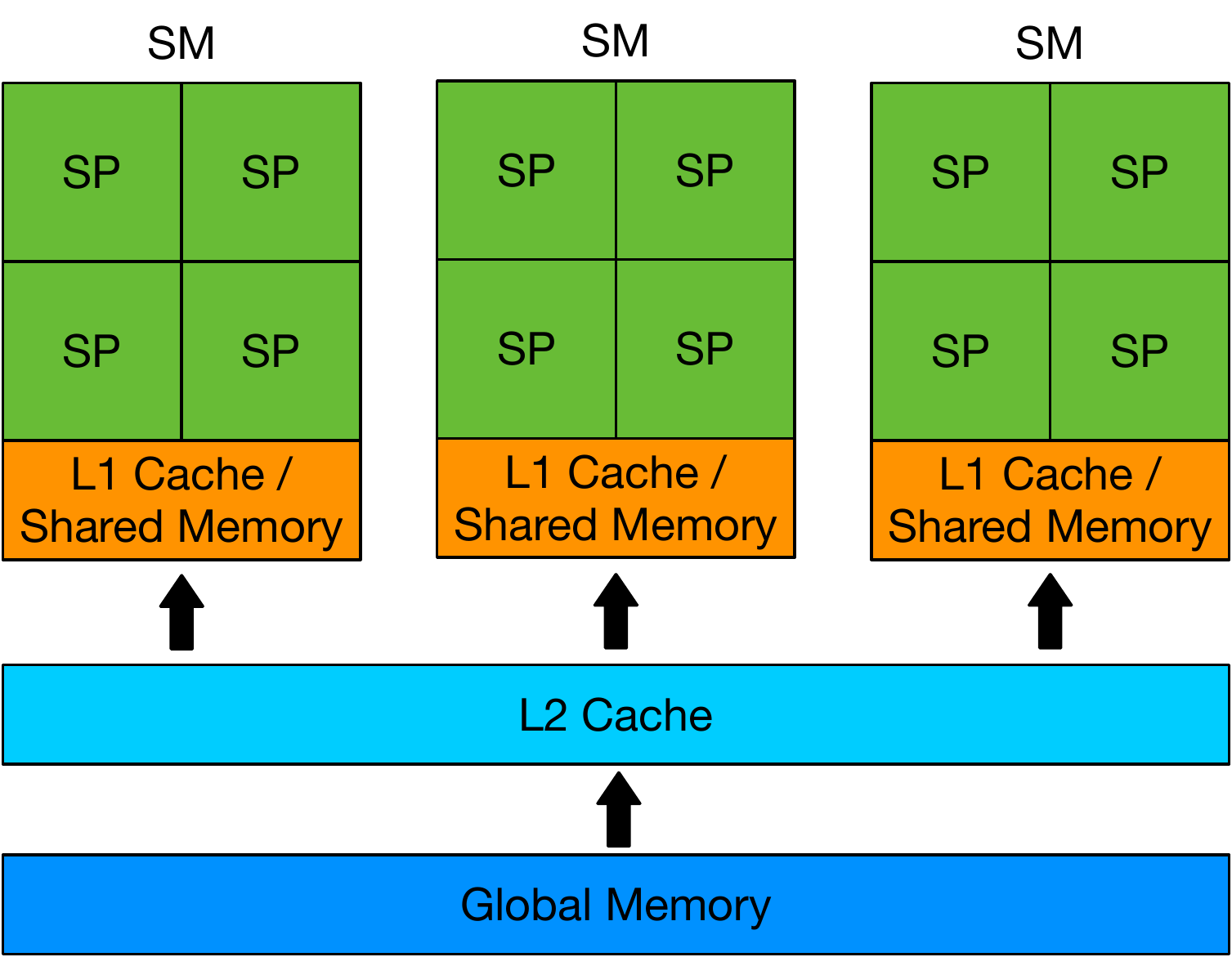}
    \caption{Memory hierarchy of CUDA-capable GPUs. SP and SM denote streaming processor and streaming multiprocessor, respectively.}
    \label{fig:fmmgpu_cuda_gpu}
\end{figure}
\begin{figure}[h]
\centering
\begin{subfigure}[b]{0.40\linewidth}
    \includegraphics[width=\linewidth]{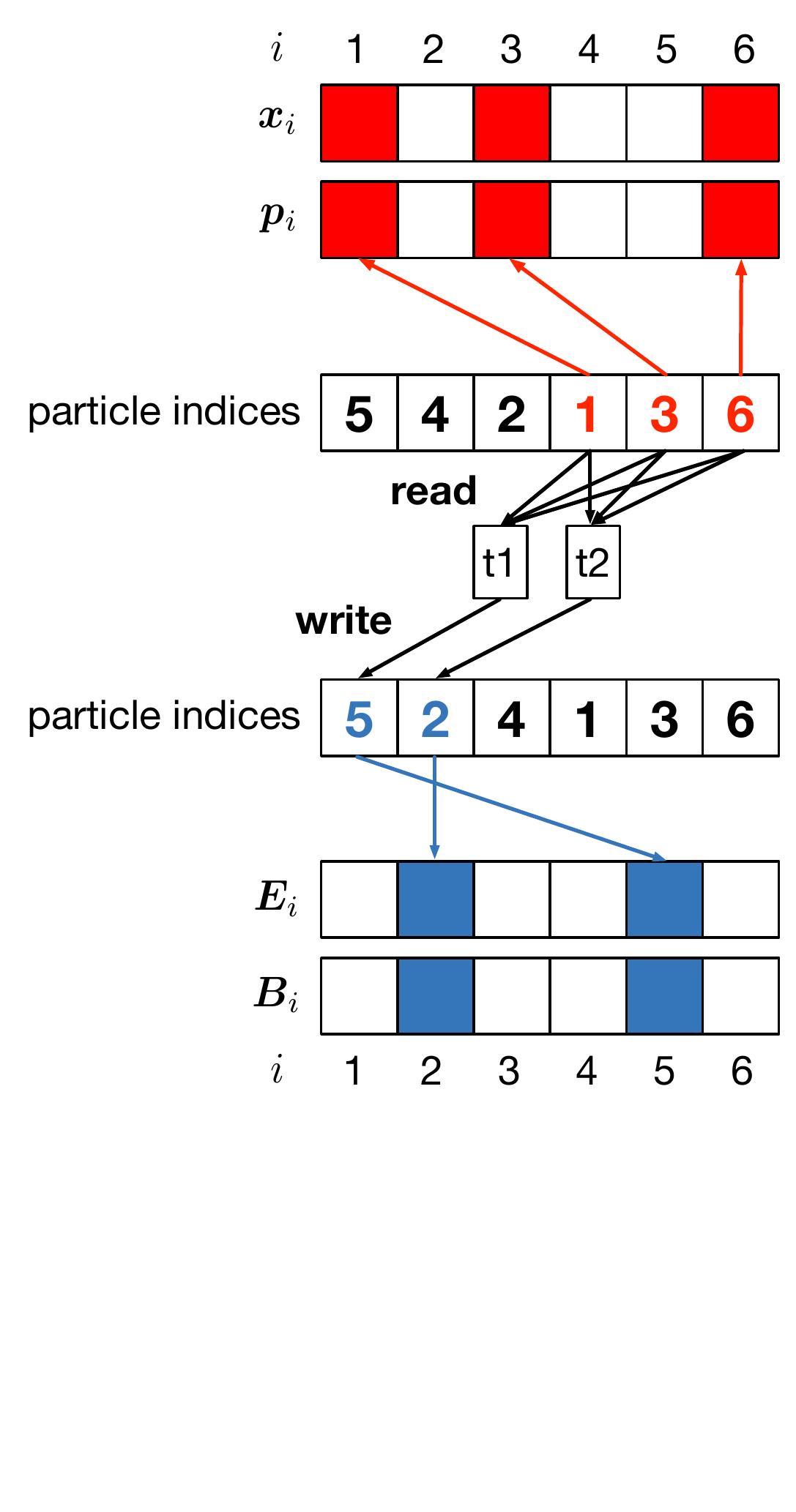}
    \caption{\label{fig:fmmgpu_P2P_kernel_a}P2P without shared memory}
\end{subfigure}
\hfill%
\begin{subfigure}[b]{0.40\linewidth}
    \includegraphics[width=\linewidth]{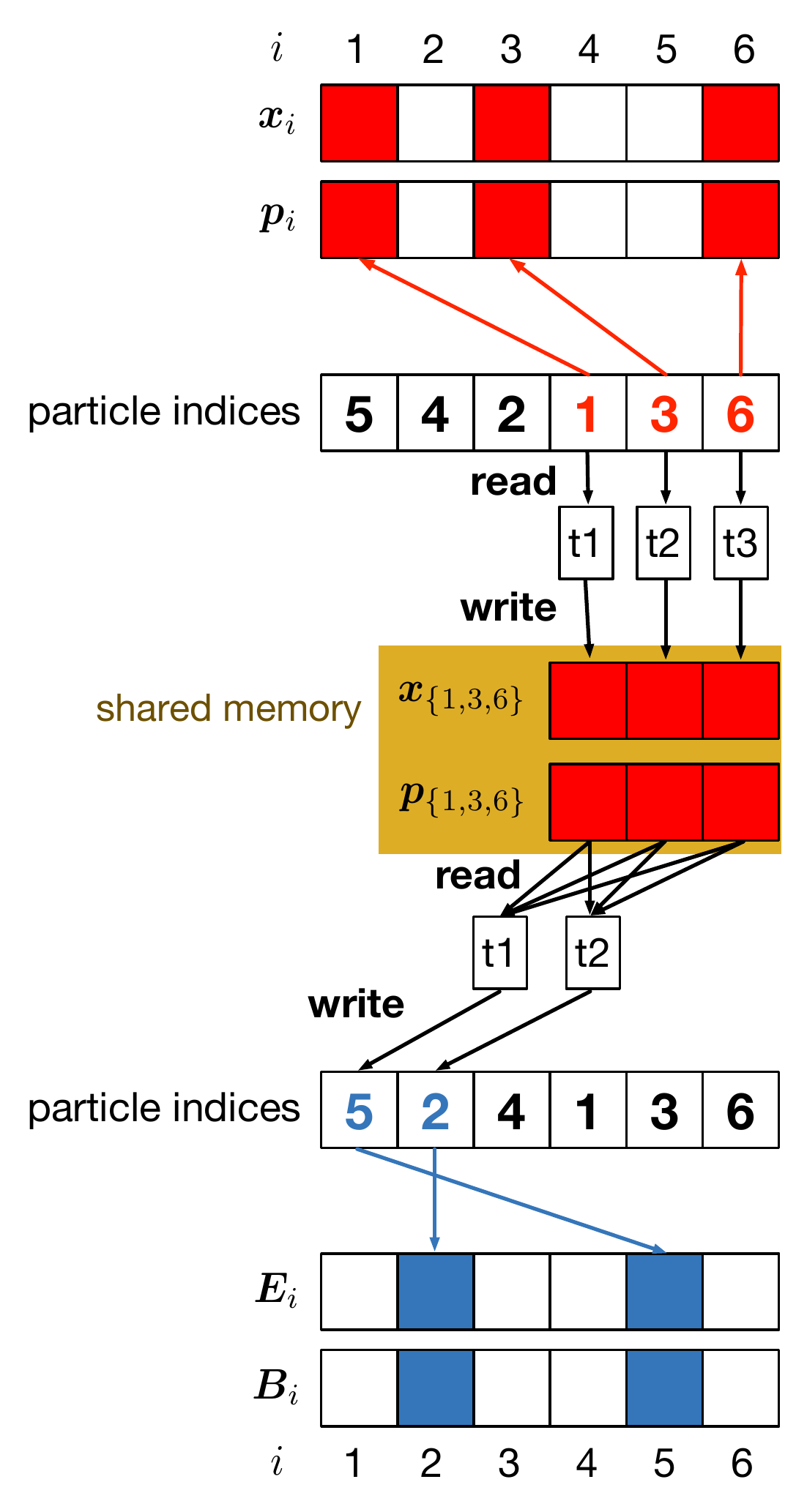}
    \caption{\label{fig:fmmgpu_P2P_kernel_b}P2P with shared memory}
\end{subfigure}
\caption{Implementations of P2P kernels (a) without shared memory and (b) with shared memory. The data associated with the target and source particle is colored in blue and red, respectively. The thread is denoted by a shorthand “t”.}
\label{fig:fmmgpu_P2P_kernel}
\end{figure}
For the M2L implementation, it is not necessary to apply the shared memory, because the member data of macro particles from a cluster is originally in a contiguous memory block. The data access is already cache-friendly so that the L1 cache in each streaming multiprocessor can be effectively used. A schematic of M2L implementation is provided in~\myref{Figure}{fig:fmmgpu_M2L_kernel}. Because the implementations of the other FMM kernels share large similarities with P2P or M2L, we will not go through the details of the implementations.   
\begin{figure}[H]
\centering
\includegraphics[width=0.3\linewidth]{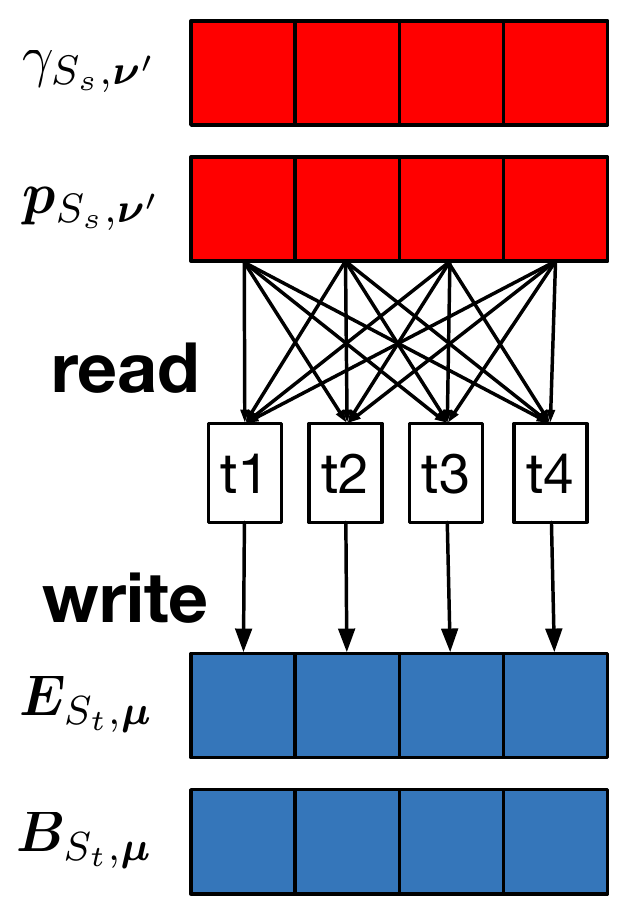}
\caption{An illustration of the implementation of the M2L kernel. The data associated with the macro particle of the target cluster and the source cluster are colored with blue and red, respectively. The thread is denoted by a shorthand “t”.}
\label{fig:fmmgpu_M2L_kernel}
\end{figure}

\begin{algorithm}\caption{Generation of Interaction Lists by Dual Tree Traversal}\label{alg:dualtraversefillitl}
$S_i$: particle cluster with node index $i$\\
$\bm{s}$: stretch factor\\
$\eta$: admissibility parameter\\
$\text{p2p\_itl}$: initially empty list of interaction for P2P (global scope)\\
$\text{m2l\_itl}$: initially empty list of interaction for M2L (global scope)\\
\SetKwProg{Func}{Function}{}{end}
\Func{\normalfont{dualtraversefillitl}$(i, j, \bm{s}, \eta)$}{
  \eIf{\normalfont{ichildren}$(i) == (-1,-1)$ $\land$  \normalfont{ichildren}$(j) == (-1,-1)$}{
    push $(i,j)$ to $\text{p2p\_itl}$\\
  }{
    $\text{isAdmissible} = \text{admissible}(S_{i},\,S_{j},\bm{s},\eta)$ (\myref{Algorithm}{alg:stretched_admissible_CC})\\
    \uIf{\normalfont{isAdmissible}}{
      push $(i,j)$ to $\text{m2l\_itl}$
    }\uElseIf{\normalfont{ichildren}$(i) == (-1,-1)$}{
      \For{$k \in \text{\normalfont{ichildren}}(j)$}{
        $\text{dualtraversefillitl}(i, k, \bm{s}, \eta)$
      }
    }\uElseIf{\normalfont{ichildren}$(j) == (-1,-1)$}{
      \For{$k\in \text{\normalfont{ichildren}}(i)$}{
        $\text{dualtraversefillitl}(k, j, \bm{s}, \eta)$
      }
    }\Else{
      \eIf{\normalfont{diam}$(S_i,\bm{s})$ $>$ \normalfont{diam}$(S_j,\bm{s})$}{
        \For{$k\in \text{\normalfont{ichildren}}(i)$}{
          $\text{dualtraversefillitl}(k, j, \bm{s}, \eta)$
        }
      }{
        \For{$k\in \text{\normalfont{ichildren}}(j)$}{
          $\text{dualtraversefillitl}(i, k, \bm{s}, \eta)$
        }
      }
    } 
  }
}
\end{algorithm}

\section{Race Conditions in P2P and M2L Kernels}
In CUDA applications, a GPU kernel can be launched with a grid of thread blocks and several thread blocks can be executed by streaming multiprocessors concurrently. In the execution of M2L or P2P, each pair of interaction is handled by one thread block and it is possible that several pairs of interaction with the same target index are handled by different thread blocks simultaneously. This can cause a race condition and produce an unexpected result because the corresponding memory data associated with a target cluster can be updated by the threads of different thread blocks at the same time (\myref{Figure}{fig:fmmgpu_race_condition_a}). One common remedy for the race conditions is using CUDA's atomic operations~\cite{cheng2014professional}, which locks a memory location so that only one exclusive thread is allowed to update the value each time. However, the atomic operations in CUDA only support some primitive types (\emph{e.g.}, \code{Int32} and \code{Float32}) and cannot be used in our implementation, because each three-dimensional vector in the physical system (\emph{e.g.}, position, momentum and vector field) is represented by a non-primitive and immutable type \code{SVector\{3,T\}}~\cite{staticarrays} with three elements of a parametric type \code{T}. Due to this immutability, we cannot apply atomic operations to change any elements of a \code{SVector\{3,T\}} object even though \code{T} is a primitive type (if so we can apply atomic operations to update each element of a \code{SVector\{3,T\}} object). Therefore, in our implementation, we divide pairs of interaction into groups such that each group only contains the pairs of interaction with the same target index; and during the kernel execution, each group will be handled by a thread block. To implement this, we first sort the pairs by the value of target index, which can be done efficiently with Quicksort. After that, we generate an additional array to indicate the start position of each group of pairs in the sorted ITL so that this array can be used to dispatch thread blocks to each group during the kernel execution (\myref{Figure}{fig:fmmgpu_race_condition_b}).
\begin{figure}[h]
\centering
\begin{subfigure}[b]{0.40\linewidth}
    \includegraphics[height=\textwidth]{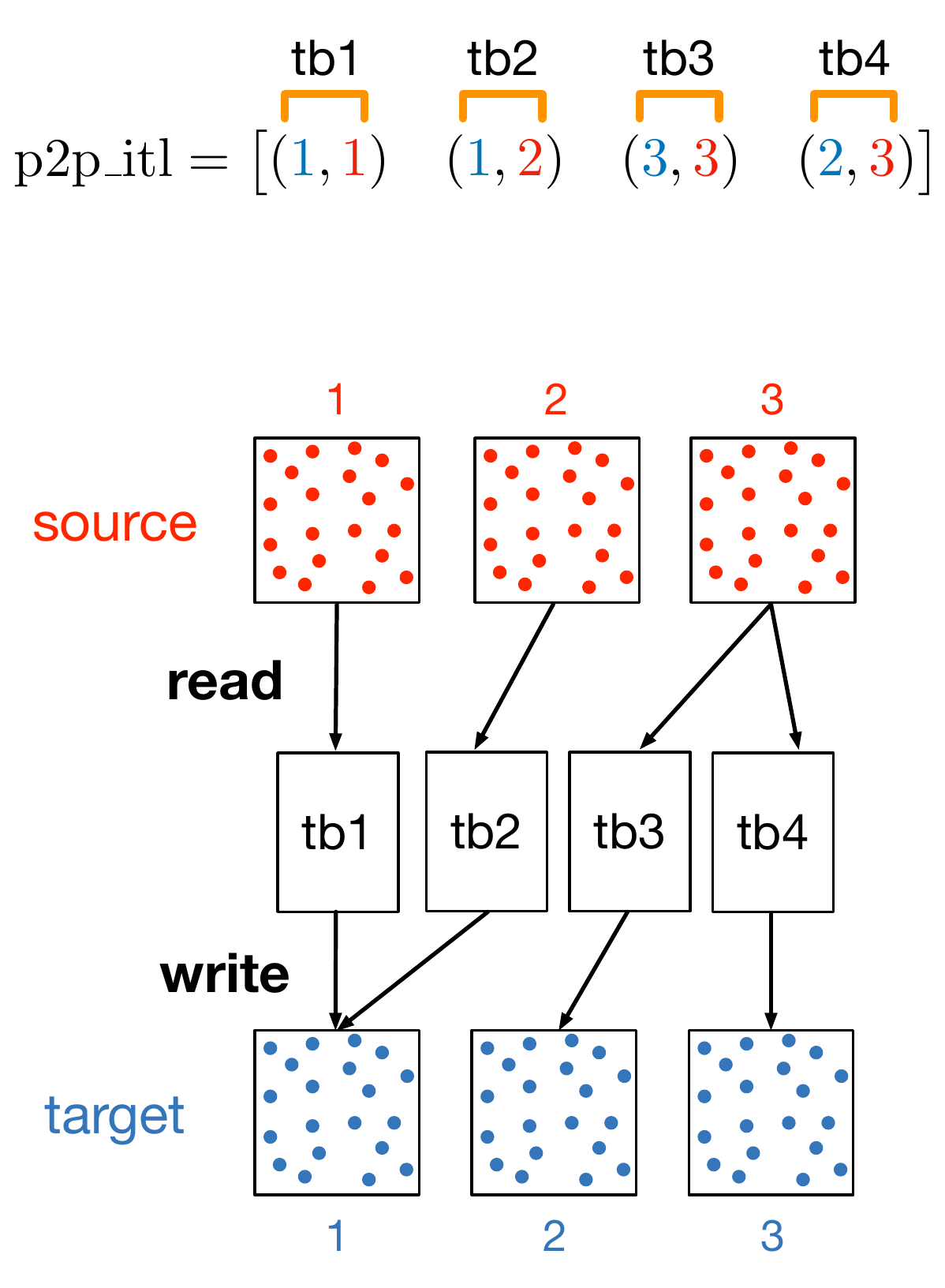}    \caption{\label{fig:fmmgpu_race_condition_a}original problem}
\end{subfigure}
\begin{subfigure}[b]{0.40\linewidth}
    \includegraphics[height=\textwidth]{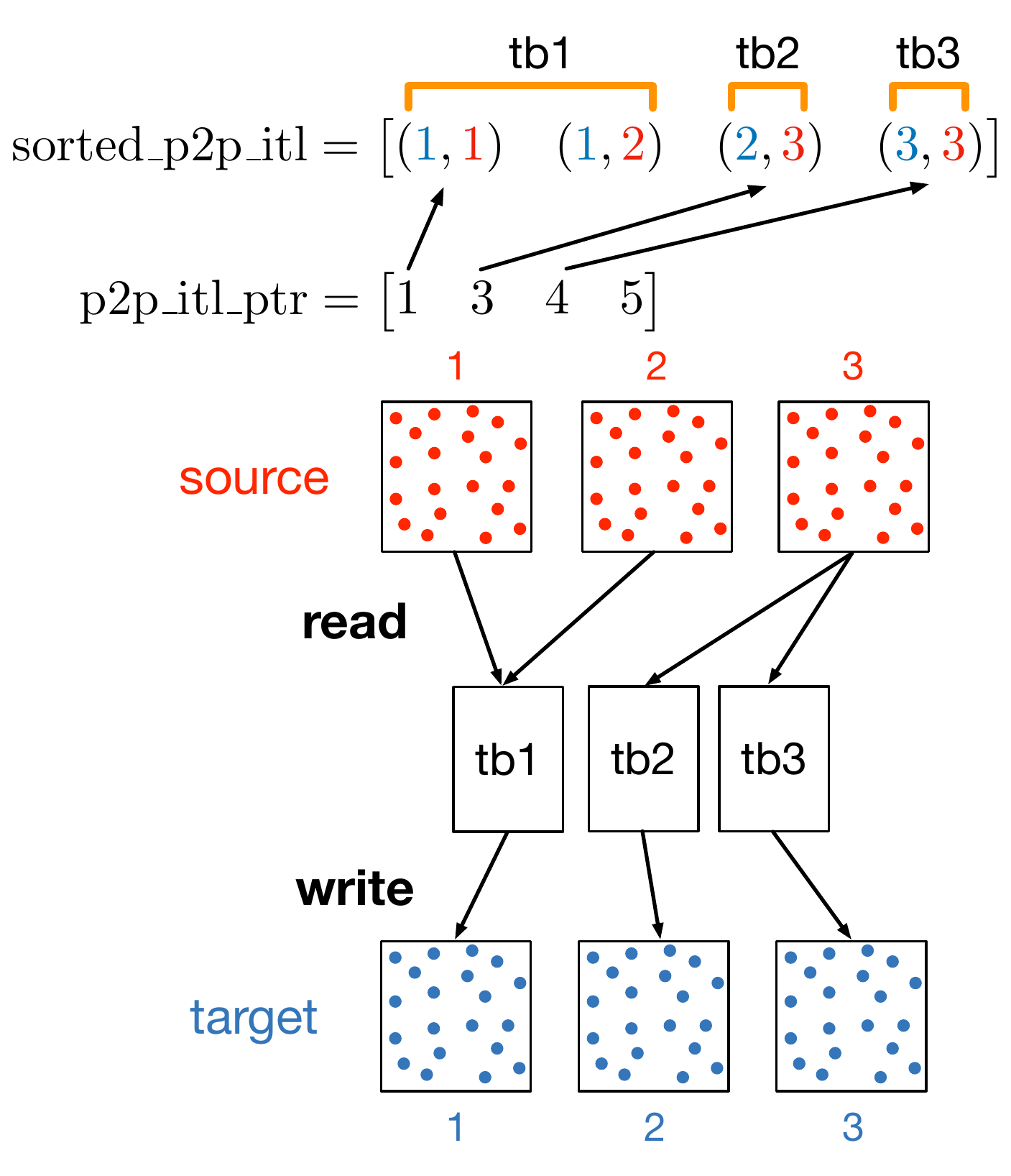}
    \caption{\label{fig:fmmgpu_race_condition_b}a solution}
\end{subfigure}
\caption{A schematic of the race condition problem in a naive parallelization of the P2P kernel (same for the M2L kernel) with ITL. \myref{Figure}{fig:fmmgpu_race_condition_a} illustrates the original problem. \myref{Figure}{fig:fmmgpu_race_condition_b} illustrates a solution by dividing the pairs of interaction into groups with the same target index. The thread block is denoted by a shorthand “tb”.}
\label{fig:fmmgpu_race_condition}
\end{figure}

\section{Performance}\label{sec:fmm_performance}
In this study, a package \href{https://github.com/ykkan/FMM4RBGPU.jl}{FMM4RBGPU.jl} is written in the Julia programming language~\cite{bezanson2017julia} with CUDA.jl~\cite{cudajl,besard2019effective}. This package provides CPU (serial) and GPU solvers for the FMM proposed in this study. The cluster tree used in this package is implemented with the array-based data structure discussed in~\myref{Section}{sec:fmm_array_tree}.

To understand the performance of our GPU parallelization, we consider a simulation with $N=2.56\times 10^7$ particles on different CPUs and GPUs as listed in~\myref{Table}{tb:fmmgpu_hardwards}. The elapsed times of the simulations (\emph{i.e.}, the execution of \myref{Algorithm}{alg:FMM}) are demonstrated in~\myref{Figure}{fig:fmmgpu_etime}. We can see that our GPU-based solver can achieve a speedup between $57$ and $197$ (relative to the result of a single CPU).

\begin{table}[ht]\renewcommand{\arraystretch}{1.2}\setlength{\tabcolsep}{40pt}
\centering
\begin{tabular}{ll}
\hline
\multicolumn{1}{c}{CPU}  & \multicolumn{1}{c}{GPU}   \\ 
\hline\hline
INTEL XEON E5-2640V4      & NVIDIA A100   \\
AMD EPYC 7402 &   NVIDIA V100    \\
INTEL XEON GOLD 5115   &  NVIDIA P100  \\
\hline
\end{tabular}
\caption{CPUs and GPUs used in the simulations for the performance benchmark.}
\label{tb:fmmgpu_hardwards}
\end{table}

\begin{figure}[h]
\centering
\begin{subfigure}[b]{0.495\linewidth}
    \includegraphics[width=\textwidth]{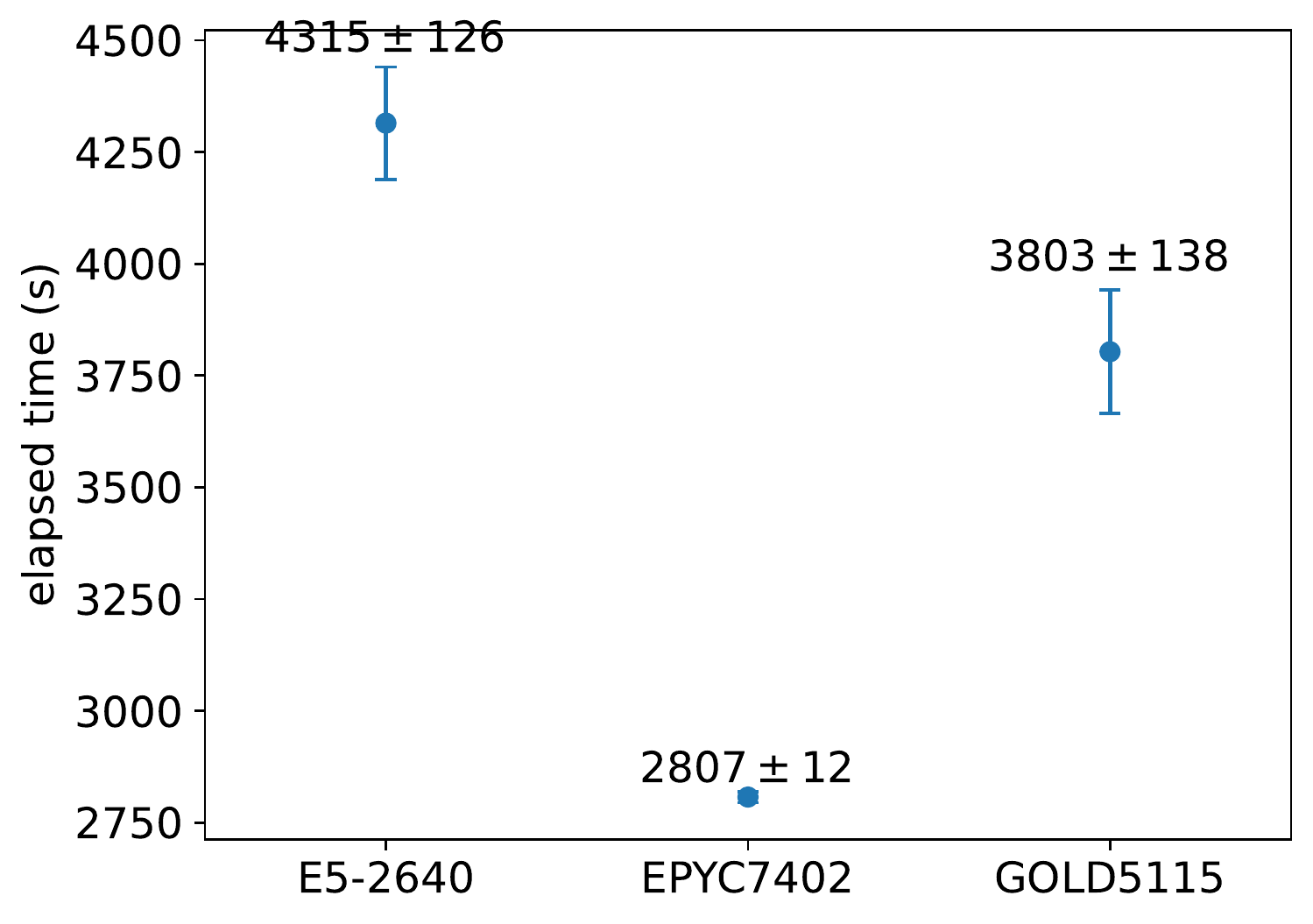}    
    \caption{\label{fig:fmmgpu_etime_a}CPU}
\end{subfigure}
\hfill%
\begin{subfigure}[b]{0.48\linewidth}
    \includegraphics[width=\textwidth]{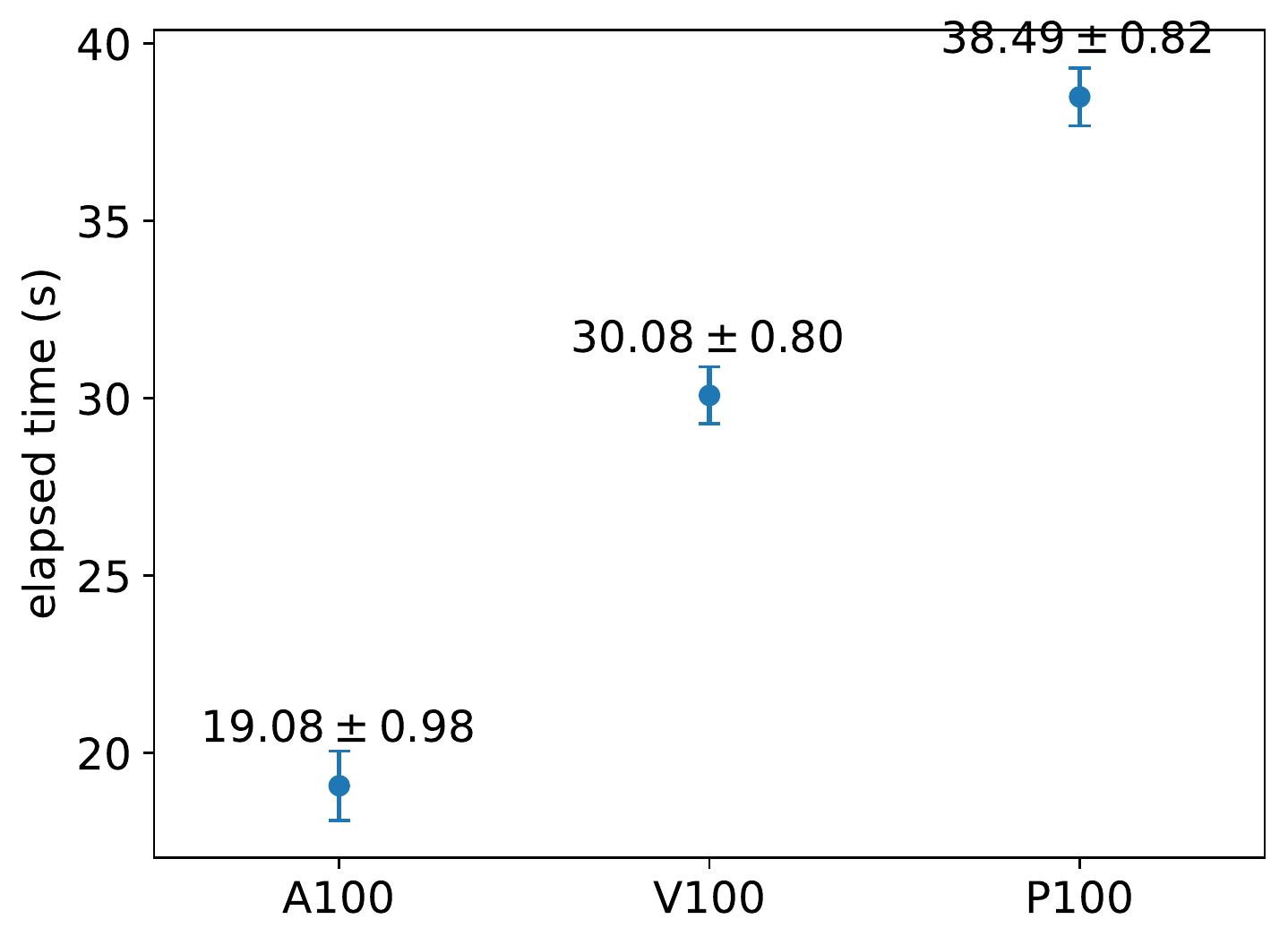}
    \caption{\label{fig:fmmgpu_etime_b}GPU}
\end{subfigure}
\caption{Elapsed times of simulations with $N=2.56\times 10^7$ particles on different (a) CPUs and (b) GPUs. The simulation is performed with $\eta=0.5$, $n=4$ and $N_0=(n+1)^3$. Each data point is the statistical result of 100 samples.}
\label{fig:fmmgpu_etime}
\end{figure}


\section{Summary}
In this study, we propose an interpolation-based FMM for the computation of the relativistic space-charge field. With our proposed modified admissibility condition, the FMM can be directly evaluated in the lab-frame without the need of a Lorentz transformation. We also consider a GPU parallelization for the proposed FMM. The pseudocode of the algorithms is provided and a corresponding package is developed in the Julia programming language. The proposed algorithms and package can be used to model the space-charge effect in the beam dynamics simulation of relativistic beams.

\appendix
\section{Definition of Cumulative Local Field}\label{appendix:fmm_proof_cumulative_local}
\begin{lemma}
Assume a target point $\bm{x}_i$ is contained in a sequence of clusters~${\{S^{l} \mid l=0,\dots,k \}}$ of each level $l$ with $S^{l+1}\subset S^{l}$ and $S^{0}=S$. The total force-field of the macro particles from this sequence of clusters transferred to this target point can be calculated by
\begin{equation*}
    f(\bm{x}_i)=
    \sum^{k}_{l=0}\sum_{\bm{\mu}}\ell_{S^{l},\bm{\mu}}(\bm{x}_i)L_{S^{l},\bm{\mu}} =\sum_{\bm{\mu}}\mathcal{L}_{S^{k},\bm{\mu}}\ell_{S^{k},\bm{\mu}}(\bm{x}_i),
\end{equation*}
where $\mathcal{L}_{S^{l}}$ is defined as 
\begin{equation*}
    \mathcal{L}_{S^{l},\bm{\mu}}:= L_{S^{l},\bm{\mu}}+ 
    \sum_{\bm{\mu}'}\mathcal{L}_{S^{l-1},\bm{\mu}'}\cdot\ell_{S^{l-1},\bm{\mu}'}(\bm{\xi}_{S^{l},\bm{\mu}})
    \quad\text{with}\quad
    \mathcal{L}_{S^{0},\bm{\mu}}:=L_{S^{0},\bm{\mu}}.
\end{equation*}
\end{lemma}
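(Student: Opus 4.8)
The plan is to argue by induction on the depth $k$ of the chain of nested clusters $S^{0}\supset S^{1}\supset\dots\supset S^{k}$ containing $\bm{x}_i$. The base case $k=0$ is immediate: by the defining convention $\mathcal{L}_{S^{0},\bm{\mu}}:=L_{S^{0},\bm{\mu}}$, both sides of the claimed identity reduce to $\sum_{\bm{\mu}}\ell_{S^{0},\bm{\mu}}(\bm{x}_i)L_{S^{0},\bm{\mu}}$.

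For the inductive step, suppose the identity holds for the (prefix) chain $S^{0}\supset\dots\supset S^{k-1}$, i.e. $\sum_{l=0}^{k-1}\sum_{\bm{\mu}}\ell_{S^{l},\bm{\mu}}(\bm{x}_i)L_{S^{l},\bm{\mu}}=\sum_{\bm{\mu}'}\mathcal{L}_{S^{k-1},\bm{\mu}'}\ell_{S^{k-1},\bm{\mu}'}(\bm{x}_i)$; note that the quantities $L_{S^{l},\bm{\mu}}$, $\mathcal{L}_{S^{l},\bm{\mu}}$, and the nodes $\bm{\xi}_{S^{l},\bm{\mu}}$ depend only on the cluster $S^{l}$ itself, so they are unchanged when the chain is extended by $S^{k}$. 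Splitting off the $l=k$ term, it then suffices to re-express $\sum_{\bm{\mu}'}\mathcal{L}_{S^{k-1},\bm{\mu}'}\ell_{S^{k-1},\bm{\mu}'}(\bm{x}_i)$ through the interpolation nodes $\bm{\xi}_{S^{k},\bm{\mu}}$ of the finer cluster. The key observation is that each $\ell_{S^{k-1},\bm{\mu}'}$ is, by construction, a tensor product $\ell^{x}_{S^{k-1},\mu'_1}\cdot\ell^{y}_{S^{k-1},\mu'_2}\cdot\ell^{z}_{S^{k-1},\mu'_3}$ of univariate polynomials of degree $\le n$; applying \myref{Theorem}{th:good_polynomial_property} in each coordinate separately (using that $\bm{x}_i\in S^{k}\subset S^{k-1}$, so each coordinate of $\bm{x}_i$ lies in the corresponding interval of $S^{k}$) yields the exact reproduction $\ell_{S^{k-1},\bm{\mu}'}(\bm{x}_i)=\sum_{\bm{\mu}}\ell_{S^{k-1},\bm{\mu}'}(\bm{\xi}_{S^{k},\bm{\mu}})\,\ell_{S^{k},\bm{\mu}}(\bm{x}_i)$ on $S^{k}$.

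Substituting this reproduction identity into the inductive hypothesis, adding back the $l=k$ term, and interchanging the two finite sums over $\bm{\mu}$ and $\bm{\mu}'$ gives $\sum_{\bm{\mu}}\ell_{S^{k},\bm{\mu}}(\bm{x}_i)\bigl(L_{S^{k},\bm{\mu}}+\sum_{\bm{\mu}'}\mathcal{L}_{S^{k-1},\bm{\mu}'}\ell_{S^{k-1},\bm{\mu}'}(\bm{\xi}_{S^{k},\bm{\mu}})\bigr)$, which is exactly $\sum_{\bm{\mu}}\ell_{S^{k},\bm{\mu}}(\bm{x}_i)\mathcal{L}_{S^{k},\bm{\mu}}$ by the recursive definition of the cumulative local field. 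This closes the induction and proves the lemma.

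The only point requiring care is the passage from the one-dimensional statement of \myref{Theorem}{th:good_polynomial_property} to the three-dimensional setting: one must note explicitly that the tensor-product structure of both the Lagrange bases and the interpolation-point grids allows the polynomial-reproduction property to be invoked coordinate by coordinate, and that the interpolation degree $n$ is the same on every level of the cluster tree, so that the degree-$n$ interpolation on $S^{k}$ indeed reproduces $\ell_{S^{k-1},\bm{\mu}'}$ exactly rather than only approximately. Everything else is routine bookkeeping with finite sums.
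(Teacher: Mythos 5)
Your proposal is correct and follows essentially the same argument as the paper's proof in \myref{Section}{appendix:fmm_proof_cumulative_local}: induction on $k$, splitting off the $l=k$ term, invoking the inductive hypothesis, and re-expanding $\ell_{S^{k-1},\bm{\mu}'}(\bm{x}_i)$ via the polynomial-reproduction property~\eqref{eq:interpolation_of_polynormial} before regrouping into the recursive definition of $\mathcal{L}_{S^{k},\bm{\mu}}$. Your closing remark about applying \myref{Theorem}{th:good_polynomial_property} coordinate-wise through the tensor-product structure is a point the paper leaves implicit, but it does not change the route of the argument.
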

\begin{proof}
We prove this statement by mathematical induction. By the definition above, the statement automatically holds for the case $l=0$. We only need to prove the statement holds for the case $l=k$ provided that it is true for $l=k-1$. Thus, we have
\begin{align*}
    &\,\,\,\sum^{k}_{l=0}\sum_{\bm{\mu}}\ell_{S^{l},\bm{\mu}}(\bm{x}_i)L_{S^{l},\bm{\mu}} \\
    =&\,\,\,\sum_{\bm{\mu}}\ell_{S^{k},\bm{\mu}}(\bm{x}_i)L_{S^{k},\bm{\mu}} + \sum^{k-1}_{l=0}\sum_{\bm{\mu}}\ell_{S^{l},\bm{\mu}}(\bm{x}_i)L_{S^{l},\bm{\mu}}\\
    =&\,\,\,\sum_{\bm{\mu}}\ell_{S^{k},\bm{\mu}}(\bm{x}_i)L_{S^{k},\bm{\mu}} + \sum_{\bm{\mu}'}\mathcal{L}_{S^{k-1},\bm{\mu}'}\cdot\ell_{S^{k-1},\bm{\mu}'}(\bm{x}_i)\quad \text{(by assumption})\\
    \upeq{\eqref{eq:interpolation_of_polynormial}}&\,\,\, \sum_{\bm{\mu}}\ell_{S^{k},\bm{\mu}}(\bm{x}_i)L_{S^{k},\bm{\mu}} + \sum_{\bm{\mu}'}\mathcal{L}_{S^{k-1},\bm{\mu}'}\sum_{\bm{\mu}}\ell_{S^{k-1},\bm{\mu}'}(\bm{\xi}_{S^{k},\bm{\mu}})\ell_{S^{k},\bm{\mu}}(\bm{x}_i) \\
    =&\,\,\,\sum_{\bm{\mu}}
    \underbrace{\biggl(L_{S^{k},\bm{\mu}} + \sum_{\bm{\mu}'}\mathcal{L}_{S^{k-1},\bm{\mu}'}\cdot\ell_{S^{k-1},\bm{\mu}'}(\bm{\xi}_{S^{k},\bm{\mu}})\biggr)}_{=\mathcal{L}_{S^{k},\bm{\mu}}}
    \ell_{S^{k},\bm{\mu}}(\bm{x}_i).
    \qedhere
\end{align*}
\end{proof}

\section{Data Structure of the Cluster Tree}\label{appendx:treecode_clustertree_datastructure}
Although the FMM solvers developed in this work are written in the Julia programming language, we use C-style pseudocode to illustrate the data structure of the cluster tree. 
The data structure of the cluster tree can be naively designed as follow:
\begin{verbatim}
struct Cluster {
  size_t npar;
  value_type (*positions)[3]; // array of particle positions 
  Cluster* children;
}    
\end{verbatim}
However, this naive implementation may require a significant amount of memory as the position of particles in each cluster is explicitly stored. For a balanced cluster tree describing an $N$-particles cluster, the number of particle positions to be stored is $N\log_2{N}$. If we have $N=2\times 10^6$, a memory of roughly $1$~GB will need to be allocated during the construction of the cluster tree and this could cause a performance bottleneck. 

Alternatively, one may store the particle positions outside the structure and declare an external array \texttt{parindices} to store the indices of all the particles. In such a case, the data structure can be expressed as 
\begin{verbatim}
size_t parindices[N]
value_type positions[N][3]
struct Cluster {
  size_t pindex_lo;
  size_t pindex_hi;
  Cluster* children;
}
\end{verbatim}
If the elements of \texttt{parindices} are arranged in such a manner that the indices of the particles in the cluster $S$ occupy in \texttt{parindices} contiguously from $l$-th (\texttt{pindex\_lo}) to $h$-th (\texttt{pindex\_hi}) location, their values in \texttt{parindices} (\emph{i.e.}, their indices) can be expressed as
\begin{equation*}
    p_{l},p_{l+1},\ldots,p_{h}.
\end{equation*}
In the subdivision of $S$, we first determine the splitting coordinate direction $g\in\{x,y,z\}$ from $\text{bbox}(S)$ and permute the elements in \texttt{parindices} that
\begin{equation*}
    p'_{l},\ldots,p'_{\left\lfloor\tfrac{l+h}{2}\right\rfloor},p'_{\left\lfloor\tfrac{l+h}{2}\right\rfloor+1},\ldots
    p'_{h}
\end{equation*}
and

\begin{equation*}
    g_{p'_i} 
    \begin{cases}
    \leq g_{p'_{\left\lfloor\tfrac{l+h}{2}\right\rfloor}}\quad\text{ if } \left\lfloor\tfrac{l+h}{2}\right\rfloor \ge i \ge l,\\
    > g_{p'_{\left\lfloor\tfrac{l+h}{2}\right\rfloor}}\quad\text{ if }\left\lfloor\tfrac{l+h}{2}\right\rfloor < i \leq h.
    \end{cases}
\end{equation*}
This permutation enables the objects of the children clusters \texttt{S1} and \texttt{S2} to access their belonging particle indices by:
\begin{align*}
    & \texttt{S1.pindex\_lo=}l,\, \texttt{S1.pindex\_hi=}\lfloor\tfrac{l+h}{2}\rfloor,\\
    & \texttt{S2.pindex\_lo=}\lfloor\tfrac{l+h}{2}\rfloor+1,\, \texttt{S2.pindex\_hi=}h,\\
    &\texttt{parindicies\string[S1.pindex\_lo\string]},\ldots,\texttt{parindicies\string[S1.pindex\_hi\string]},\\
    &\texttt{parindicies\string[S2.pindex\_lo\string]},\ldots,\texttt{parindicies\string[S2.pindex\_hi\string]}.
\end{align*}
In this study, the permutation is implemented by the Quickselect algorithm with the Lomuto partition scheme~\cite{cormen2022introduction}. The complexity on average is $\mathcal{O}(N)$ and can be $\mathcal{O}(N^2)$ in the worst-case scenario.




\section*{Acknowledgement}
This work was supported by DASHH (Data Science in Hamburg -- HELM\-HOLTZ Graduate School for the Structure of Matter) with the Grant-No.\ HIDSS-0002 and in part by the European Research Council under the European Union's Seventh Framework Programme (FP7/2007-2013) through Synergy Grant (609920). The authors acknowledge the computational resources of the Maxwell Cluster operated at Deutsches Elektronen-Synchrotron (DESY).


\bibliographystyle{elsarticle-num}
\bibliography{references}







\end{document}